\def\uno{\mathbbm 1}
\def\vol{\mathrm{vol}}
\def\bar{\overline}
\def\fro{\mathrm{F}}
\newtheorem{theorem}{Theorem}[section]
\newtheorem{lemma}[theorem]{Lemma}
\newtheorem{corollary}[theorem]{Corollary}
\newtheorem{definition}[theorem]{Definition}
\newtheorem{remark}[theorem]{Remark}
\newtheorem{example}[theorem]{Example}
\def\ps@pprintTitle{%
 \let\@oddhead\@empty
 \let\@evenhead\@empty
 \def\@oddfoot{}%
 \let\@evenfoot\@oddfoot}
\begin{document}

\title{A modularity based spectral method 
for simultaneous community and anti-community detection}

\author[dima]{Dario Fasino\corref{cor1}}
\ead{dario.fasino@uniud.it}

\address[dima]{Department of Mathematics, Computer Science and Physics,
University of Udine, Udine, Italy.}

\author[sb]{Francesco Tudisco}
\ead{francesco.tudisco@math.unipd.it}

\address[sb]{Department of Mathematics, University of Padua, Padua, Italy.}
\cortext[cor1]{Corresponding author}

\begin{keyword}
Spectral methods, modularity matrix, stochastic block model, inflation product.

\MSC 
05C50, 
15A42,  
15B99.  
\end{keyword}

\begin{abstract}
In a graph or complex network, communities and anti-communities 
are node sets whose modularity attains extremely large values,
positive and negative, respectively.  We consider the simultaneous
detection of communities and anti-communities, 
by looking at spectral methods based on various 
matrix-based definitions of the modularity of a vertex set. 
Invariant subspaces associated to extreme eigenvalues
of these matrices provide indications
on the presence of 
both kinds of modular structure in the network.
The localization of the relevant invariant subspaces can be estimated by looking at particular
matrix angles based on 
Frobenius inner products. 
\end{abstract}

\maketitle

\section{Introduction}

This paper addresses 
the problem of 
grouping nodes of a network into communities and anti-communities, 
possibly emerging from a neutral background. 
A community is roughly defined as a set of nodes being highly connected inside and poorly connected with the rest of the graph. Conversely, an anti-community is a node set being loosely connected inside but having many external connections.  Revealing these structures in data and networks is a challenging and relevant problem which has applications in many disciplines, ranging from computer science to physics and several natural and social sciences \cite{Anti-modularity,Estrada-negative,holme2003network,mercado2016clustering,HighamLockAndKey}.

In order to address this problem from the mathematical point of view one needs a quantitative definition of what a community and an anti-community is. To this end several merit functions have been introduced in the recent literature. 
A very popular and fruitful idea is based on the concept of modularity, originally introduced for community detection 
in the statistical mechanics literature
\cite{Newman-eigenvector,newman2004finding}. 
The modularity measure of a set of nodes $S\subset V$ in a graph $G=(V,E)$ quantifies the difference between the actual weights of edges in $S$ with respect to the expected weight, if edges were placed at random according to a prescribed \textit{null model}. 
The modularity-based criterion for community detection thus identifies a  subset $S$ as a community if its modularity measure is ``large'',  
and as an anti-community if its modularity measure is ``small''.
The (anti-)community detection problem thus boils down to a combinatorial optimization problem, whose solution is typically approximated through a matrix-based technique which exploits the spectrum of a suitably defined \textit{modularity matrix}.

In this work we show that dominant eigenvalues of generalized modularity matrices can be used to simultaneously look for communities and anti-communities. We propose a spectral method based on the eigenspaces associated with those eigenvalues and 
relate its performance 
to certain matrix angles. 
We then analyze the stochastic block model,
one of the most useful generative models in community detection, to obtain indications on the average performance of 
the proposed method.
To that goal, we characterize 
the dominant eigenvalues and eigenvectors of the average modularity matrix in the model.
A couple of numerical experiments are included to validate the proposed 
computational strategy.

\subsection{Notations and preliminaries} 

In the sequel
we give a brief review of standard concepts and symbols from algebraic graph theory that we will use throughout the paper.
We assume that $G = (V, E)$ 
is a 
finite, undirected graph
where $V$ and $E$ are the vertex and edge 
sets, respectively. We will
identify $V$ with $\{1,\dots,n\}$. 
We denote adjacency of vertices $i$ and $j$ 
as $ij\in E$. We allow positive weights on both the vertex and the edge sets which we denote by $\mu:V\to\mathbb R_+$, $i\mapsto \mu(i)$ and $w:E\to\mathbb R_+$, $ij\mapsto w_{ij}$, respectively.

The symbol $A$ denotes the adjacency matrix of $G$, 
that is, $A = (a_{ij})$
where $a_{ij} = w_{ij}$ if $ij \in E$, and $a_{ij}=0$ otherwise.
In particular, $A$ is a symmetric, componentwise nonnegative matrix.
The (generalized) degree of vertex $i\in V$ is
$d_i = \sum_{j=1}^n w_{ij}$, and
$\uno$ denotes an all-one vector
whose dimension depends on the context.
With this notation, the degree vector is $d = A\uno$. 

The subgraph induced by a set $C\subseteq V$ is the 
graph $G(C)$ whose adjacency matrix is the submatrix of $A$
whose row and column indices are in $C$.
The cardinality of $C$ is denoted by $|C|$, 
and its weight by $\mu(C)=\sum_{i\in C}\mu(i)$. For consistency with other works by various authors, a special notation is reserved for the case $\mu(i)=d_i$ where we write $\vol(C) = \sum_{i \in C}d_i$ for the \textit{volume} of $C$. Correspondingly,  
$\vol(V)= \sum_{i \in V}d_i$ denotes the volume of the whole graph. 
Moreover, we denote by
$\bar C$ the complement $V \setminus C$,
and let $\uno_C$ be its \textit{characteristic vector}, defined as $(\uno_C)_i =1$ if $i \in C$ and $(\uno_C)_i=0$ otherwise,
so that $\vol(C) = \uno_C^Td$. 

The Frobenius inner product of two matrices $A,B\in\mathbb{R}^{n\times n}$ is $\langle A,B\rangle = \mathrm{trace}(A^TB)$,
and the associated matrix norm is $\|A\|_\fro = \langle A,A\rangle^{1/2}$.   
For later reference, we recall the Hoffman--Wielandt
theorem for singular values and the Eckart--Young theorem,
see e.g., \cite{GVL, MR1061154}.

\begin{theorem}[Hoffman--Wielandt]   \label{thm:hoffman-wielandt-sv}
Let $A$ and $B$ be two $n\times n$ matrices. Denote by 
$s_i(A)$ and $s_i(B)$ their singular values
arranged in nonincreasing order. Then
$$
   \sum_{i=1}^n (s_i(A)-s_i(B))^2 
   \leq \|A-B\|_\fro^2 .
$$
\end{theorem}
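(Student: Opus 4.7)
The approach I would take is to reduce the singular value statement to the classical Hoffman--Wielandt theorem for Hermitian matrices (which says that, with eigenvalues sorted in the same order, $\sum_i (\lambda_i(X) - \lambda_i(Y))^2 \leq \|X-Y\|_\fro^2$ whenever $X$ and $Y$ are Hermitian). The tool is the standard Jordan--Wielandt dilation: given a real $n \times n$ matrix $M$, form the symmetric $2n \times 2n$ block matrix
$$
\tilde M = \begin{pmatrix} 0 & M \\ M^T & 0 \end{pmatrix}.
$$
Its spectrum is exactly $\{\pm s_1(M), \ldots, \pm s_n(M)\}$, as one sees by checking that if $M = U\Sigma V^T$ is an SVD and $u_i, v_i$ denote the $i$-th columns, then $(u_i^T, \pm v_i^T)^T/\sqrt{2}$ are eigenvectors of $\tilde M$ with eigenvalues $\pm s_i(M)$.

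The plan is then as follows. First, I apply the dilation to both $A$ and $B$, obtaining symmetric $\tilde A$ and $\tilde B$. A direct block computation gives $\|\tilde A - \tilde B\|_\fro^2 = 2\|A-B\|_\fro^2$, since the zero diagonal blocks contribute nothing and the two off-diagonal blocks of $\tilde A - \tilde B$ are transposes of each other, each of Frobenius norm $\|A-B\|_\fro$. Next, I invoke the Hermitian Hoffman--Wielandt theorem applied to $\tilde A$ and $\tilde B$ with eigenvalues listed in nonincreasing order, producing
$$
\sum_{i=1}^{2n} \bigl(\lambda_i(\tilde A) - \lambda_i(\tilde B)\bigr)^2 \leq \|\tilde A - \tilde B\|_\fro^2 = 2\|A-B\|_\fro^2.
$$

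Finally, I evaluate the left-hand side. The nonincreasing ordering of the $2n$ eigenvalues of $\tilde A$ is precisely $s_1(A) \geq \cdots \geq s_n(A) \geq -s_n(A) \geq \cdots \geq -s_1(A)$, and likewise for $\tilde B$, so the matching pairs contributions $(s_i(A) - s_i(B))^2$ from the first $n$ positions and an identical copy from the last $n$. The sum is therefore $2\sum_{i=1}^n (s_i(A) - s_i(B))^2$; dividing by two yields the stated inequality. The only non-trivial ingredient is the Hermitian Hoffman--Wielandt theorem itself, which we are allowed to assume as classical; the real work here is the bookkeeping in the dilation argument, and the one subtle point—that the $2n$ eigenvalues of $\tilde A$ and $\tilde B$ really do line up as claimed once sorted—is verified directly by the interlacing of signs displayed above. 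An alternative direct proof would start from the SVDs $A = U_A \Sigma_A V_A^T$ and $B = U_B \Sigma_B V_B^T$, expand $\|A-B\|_\fro^2 = \|\Sigma_A\|_\fro^2 + \|\Sigma_B\|_\fro^2 - 2\,\mathrm{trace}(\Sigma_A P \Sigma_B Q^T)$ with $P = U_A^T U_B$ and $Q = V_A^T V_B$ orthogonal, and then bound the trace term using that the Hadamard product $P \odot Q$ has $|\cdot|$ doubly sub-stochastic (by Cauchy--Schwarz on its rows and columns) combined with the rearrangement inequality; but the dilation route is shorter and avoids the doubly stochastic majorization step.
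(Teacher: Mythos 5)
Your proposal is correct. Note that the paper does not prove this statement at all: it is recalled as a classical result with a pointer to Golub--Van Loan and Stewart--Sun, so there is no in-paper argument to compare against. Your reduction via the Jordan--Wielandt dilation $\tilde M = \bigl(\begin{smallmatrix} 0 & M \\ M^T & 0 \end{smallmatrix}\bigr)$ is the standard textbook derivation and all the bookkeeping checks out: the spectrum of $\tilde A$ sorted nonincreasingly is $s_1(A)\geq\cdots\geq s_n(A)\geq -s_n(A)\geq\cdots\geq -s_1(A)$ (the sign interlacing is trivial here because all singular values are nonnegative, so the positive block sits entirely above the negative block), the $i$-th and $(2n+1-i)$-th matched pairs each contribute $(s_i(A)-s_i(B))^2$, and $\|\tilde A-\tilde B\|_\fro^2 = 2\|A-B\|_\fro^2$, so the factor of two cancels. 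The only ingredient you assume, the sorted-eigenvalue form of Hoffman--Wielandt for symmetric matrices, is legitimately classical and is exactly what the paper's cited references supply.
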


\begin{theorem}[Eckart--Young]   \label{thm:eckart-young}
Let $A = U\Lambda U^T$ be a symmetric matrix
with eigenvalues arranged in nonincreasing modulus,
$|\lambda_1|\geq |\lambda_2| \geq \ldots \geq |\lambda_n|$.
For any integer $1\leq k \leq n$ let $A_k = U_k\Lambda_kU_k^T$
where $U_k$ is the $n\times k$ matrix 
formed by the first $k$ columns of $U$ and
$\Lambda_k = \mathrm{Diag}(\lambda_1,\ldots,\lambda_k)$.
Then
$$
   \| A - A_k \|_\fro^2 = \min_{\mathrm{rank}(B) \leq k}\|A - B\|_\fro^2 
   = \textstyle{\sum_{i=k+1}^n \lambda_i^2} .
$$
\end{theorem}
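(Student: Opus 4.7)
The plan is to establish the two equalities separately. The first identity $\|A - A_k\|_\fro^2 = \sum_{i=k+1}^n \lambda_i^2$ is a straightforward computation via orthogonal invariance of the Frobenius norm, while the optimality of $A_k$ among rank-$k$ approximations is the main content of the theorem and I expect to deduce it from Theorem \ref{thm:hoffman-wielandt-sv}.

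For the first step, I will write $A - A_k = U D U^T$ where $D = \mathrm{Diag}(0,\ldots,0,\lambda_{k+1},\ldots,\lambda_n)$ has $k$ leading zeros, and invoke the orthogonal invariance $\|U X U^T\|_\fro = \|X\|_\fro$ to conclude $\|A - A_k\|_\fro^2 = \|D\|_\fro^2 = \sum_{i=k+1}^n \lambda_i^2$. This computation simultaneously supplies the upper bound on the minimum, since $A_k$ has rank at most $k$.

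For the matching lower bound, I take an arbitrary matrix $B$ with $\mathrm{rank}(B) \leq k$, so that $s_i(B) = 0$ for every $i > k$. Because $A$ is symmetric and its eigenvalues are arranged by nonincreasing modulus, $s_i(A) = |\lambda_i|$ for every $i$. Applying Theorem \ref{thm:hoffman-wielandt-sv} then yields
$$\|A - B\|_\fro^2 \;\geq\; \sum_{i=1}^n (s_i(A) - s_i(B))^2 \;\geq\; \sum_{i=k+1}^n s_i(A)^2 \;=\; \sum_{i=k+1}^n \lambda_i^2,$$
where the middle inequality simply discards the nonnegative terms corresponding to $i \leq k$. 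This lower bound matches the value $\|A - A_k\|_\fro^2$ computed above, certifying the optimality of $A_k$.

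The main difficulty, if one can call it that, is bookkeeping rather than conceptual: I need to check that the eigenvalue ordering by modulus assumed in the statement coincides with the standard nonincreasing ordering of singular values, so that $s_i(A)$ equals $|\lambda_i|$ index by index. For a symmetric matrix this follows immediately from the spectral decomposition $A = U \Lambda U^T$, so once Theorem \ref{thm:hoffman-wielandt-sv} is available the whole argument reduces to the two short steps above.
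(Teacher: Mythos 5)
Your proof is correct. The paper does not actually prove Theorem \ref{thm:eckart-young}: it is recalled as a classical result with citations to \cite{GVL, MR1061154}, so there is no in-paper argument to compare against. That said, your derivation --- computing $\|A-A_k\|_\fro^2$ by orthogonal invariance and obtaining the matching lower bound for arbitrary $B$ of rank at most $k$ from Theorem \ref{thm:hoffman-wielandt-sv}, using that $s_i(A)=|\lambda_i|$ under the modulus ordering --- is the standard proof and is exactly the route the paper's arrangement suggests, since the Hoffman--Wielandt theorem for singular values is stated immediately beforehand.
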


\subsection{Matrix projections and angles}

Let $\mathcal{S}$ be a linear subspace of $\mathbb{R}^{n\times n}$.
The orthogonal projection of a matrix $A$ onto $\mathcal{S}$
with respect to the Frobenius inner product is
$$ 
   P = \mathcal{P}(A,\mathcal{S}) \quad \Longleftrightarrow
   \quad \| A - P \|_\fro = \min_{B\in\mathcal{S}} \| A - B\|_\fro .
$$ 
Furthermore,
we define the angle between $A$ and $\mathcal{S}$
through the trigonometric functions
$$
   \sin(A,\mathcal{S}) = \frac{\|A - P\|_\fro}{\|A\|_\fro} ,
   \qquad
   \cos(A,\mathcal{S}) = \frac{\|P\|_\fro}{\|A\|_\fro} ,
$$
where $P = \mathcal{P}(A,\mathcal{S})$.

For any fixed integer $1\leq k \leq n-1$ let 
$\mathbb{O}^{n\times k}$ be the set of all $n\times k$ matrices
with orthonormal columns. For any    
$X = [x_1,\ldots,x_k]\in \mathbb{O}^{n\times k}$
we denote by $\mathcal{H}(X)$ the matrix subspace
$$
   \mathcal{H}(X) = \left\{ H = \sum_{i=1}^k \tau_i x_ix_i^T ,
   \tau_i\in\mathbb{R} \right\} .
$$
Equivalently, any matrix in $\mathcal{H}(X)$
admits the factorization $XT X^T$ where
$T = \mathrm{Diag}(\tau_1,\ldots,\tau_k)$.
Simple arguments produce the explicit expression
$$
    \mathcal{P}(A,\mathcal{H}(X)) = \sum_{i=1}^k \tau_i x_ix_i^T ,
    \qquad \tau_i = x_i^T A x_i .
$$
Moreover, we denote by $\mathcal{K}(X)$ the matrix subspace
\begin{equation}   \label{eq:defK}
   \mathcal{K}(X) = \left\{ K = XSX^T , \
   S = S^T \in \mathbb{R}^{k\times k} \right\} .
\end{equation}

Equivalently, $K \in\mathcal{K}(X)$ 
if and only if $X^T KX$ is a symmetric $k \times k$ matrix.
Note that $\mathcal{H}(X)\subset\mathcal{K}(X)$.
Simple arguments produce the explicit expression
$$
    \mathcal{P}(A,\mathcal{K}(X)) = XX^T A XX^T .
$$

The following result is an easy consequence of the Eckart--Young theorem; we omit the simple proof, which is based on the fact that,
for arbitrary $X\in\mathbb{O}^{n\times k}$, both
$\mathcal{H}(X)$ and $\mathcal{K}(X)$ describe the set 
of all $n\times n$ 
symmetric 
matrices with rank not larger than $k$.

\begin{theorem}  \label{thm:XYZ}
Let $A = U\Lambda U^T$ be a symmetric matrix
with eigenvalues arranged in nonincreasing modulus,
$|\lambda_1|\geq |\lambda_2| \geq \ldots \geq |\lambda_n|$.
For any fixed integer $1\leq k \leq n-1$, suppose that $|\lambda_k|> |\lambda_{k+1}|$. Let $X_H,X_K\in\mathbb{O}^{n\times k}$ be solutions of the variational problems
$$
   \max_{X\in\mathbb{O}^{n\times k}} \cos(A,\mathcal{H}(X)),
   \qquad
   \max_{X\in\mathbb{O}^{n\times k}} \cos(A,\mathcal{K}(X)),   
$$
respectively. Then there exist orthogonal matrices
$Z_1,Z_2\in\mathbb{R}^{k\times k}$ such that
$$
   U_k = X_HZ_1 = X_KZ_2 ,
$$
where $U_k\in\mathbb{O}^{n\times k}$ is formed by the first $k$ columns of $U$. 
In both cases, the projection of $A$
onto the respective spaces is $A_k = U_k\Lambda_kU_k^T$ 
where $\Lambda_k = \mathrm{Diag}(\lambda_1,\ldots,\lambda_k)$.
\end{theorem}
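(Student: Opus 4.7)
The plan is to reduce both variational problems to the classical Eckart--Young best rank-$k$ approximation. The starting observation is Pythagoras: since $P=\mathcal{P}(A,\mathcal{S})$ is an orthogonal Frobenius projection, $\|A\|_\fro^2=\|A-P\|_\fro^2+\|P\|_\fro^2$, hence $\cos^2(A,\mathcal{S})+\sin^2(A,\mathcal{S})=1$. Therefore maximizing $\cos(A,\mathcal{S})$ over any family of subspaces is equivalent to minimizing $\|A-\mathcal{P}(A,\mathcal{S})\|_\fro$ over that family.

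Next I would invoke the fact stated just before the theorem: as $X$ ranges over $\mathbb{O}^{n\times k}$, both $\mathcal{H}(X)$ and $\mathcal{K}(X)$ sweep out exactly the set $\mathcal{R}_k$ of symmetric $n\times n$ matrices of rank at most $k$. For $\mathcal{H}(X)$ this is because every symmetric matrix of rank at most $k$ admits a spectral decomposition of the form $\sum_{i=1}^k\tau_i x_ix_i^T$ with $\tau_i\in\mathbb{R}$ and $X\in\mathbb{O}^{n\times k}$; for $\mathcal{K}(X)$ it follows from $\mathcal{H}(X)\subset\mathcal{K}(X)$ together with the trivial bound $\mathrm{rank}(XSX^T)\le k$. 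Consequently
\[
\min_{X\in\mathbb{O}^{n\times k}}\bigl\|A-\mathcal{P}(A,\mathcal{H}(X))\bigr\|_\fro
=\min_{X\in\mathbb{O}^{n\times k}}\bigl\|A-\mathcal{P}(A,\mathcal{K}(X))\bigr\|_\fro
=\min_{B\in\mathcal{R}_k}\|A-B\|_\fro,
\]
and by Theorem~\ref{thm:eckart-young} the right-hand side equals $\sum_{i=k+1}^n\lambda_i^2$ and is attained at $A_k=U_k\Lambda_kU_k^T$. Under the gap hypothesis $|\lambda_k|>|\lambda_{k+1}|$, the dominant $k$-dimensional invariant subspace $\mathrm{range}(U_k)$ is uniquely determined, so $A_k$ is the unique symmetric rank-$k$ best approximant of $A$.

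The final step is to translate the equality $\mathcal{P}(A,\mathcal{H}(X_H))=A_k$ into the claimed orthogonal conjugacy. Writing the projection explicitly as $\sum_{i=1}^k\tau_i x_ix_i^T$ with $\tau_i=x_i^TAx_i$, and observing that $|\lambda_k|>|\lambda_{k+1}|\ge 0$ forces $A_k$ to have rank exactly $k$, every $\tau_i$ is nonzero, so $\mathrm{range}(X_H)=\mathrm{range}(A_k)=\mathrm{range}(U_k)$. Two elements of $\mathbb{O}^{n\times k}$ spanning the same $k$-dimensional subspace differ by right multiplication by an orthogonal $k\times k$ matrix, and that matrix is the desired $Z_1$. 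The same argument applied to $\mathcal{P}(A,\mathcal{K}(X_K))=X_KX_K^TAX_KX_K^T=A_k$, in which the middle factor $X_K^TAX_K$ must be nonsingular by a rank count, produces $Z_2$.

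The step requiring the most care will be the uniqueness of the best rank-$k$ approximant, because we are only assuming a gap between $|\lambda_k|$ and $|\lambda_{k+1}|$, not among $\lambda_1,\ldots,\lambda_k$ themselves. This is not actually an obstacle: the theorem claims only equality of the \emph{column spaces} of $U_k$ and $X_H,X_K$, and any rotational freedom coming from multiplicities inside $\Lambda_k$ is absorbed precisely by the orthogonal factors $Z_1,Z_2$.
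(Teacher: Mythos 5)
Your argument is correct and follows exactly the route the paper indicates (and omits): reducing both variational problems to the Eckart--Young theorem via the observation that, as $X$ ranges over $\mathbb{O}^{n\times k}$, both $\mathcal{H}(X)$ and $\mathcal{K}(X)$ sweep out the set of symmetric matrices of rank at most $k$. The extra care you take with the uniqueness of the best rank-$k$ approximant under the gap $|\lambda_k|>|\lambda_{k+1}|$, and with the rank count forcing $\mathrm{range}(X_H)=\mathrm{range}(X_K)=\mathrm{range}(U_k)$, supplies precisely the details the paper leaves to the reader.
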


\section{Communities and anti-communities}

Discovering the presence of communities in a network is a central problem in data analysis and modern network science. Although 
there is no clear-cut definition of what a community is,
common sense suggests that a set of nodes can be recognized
as a community inside a given network if 
those nodes are tightly connected internally, and loosely connected with the surrounding ones. A rather successful formalization
of this informal argument is originally due to Newman and Girvan
\cite{newman2004finding},
who proposed a specific measure, called modularity,
to quantify the ``strength'' of a set of nodes as a
community. 
Such measure is based on the argument that $C\subseteq V$ 
can be recognized as a community if the subgraph 
induced by $C$ contains more edges than expected, if edges were placed at random according to a certain random graph model.
A very noticeable example is the 
Newman--Girvan modularity, which is defined for any
$C\subseteq V$ as
\begin{equation}   \label{eq:NG}
   Q(C) = \uno_C^T\widehat M\uno_C , \qquad
   \widehat M = A - dd^T\!/\vol(V) .
\end{equation}
Here, $\uno_C^TA\uno_C$ quantifies the overall weight of
edges internal to the subgraph $G(C)$ while 
$(\uno_C^Td)^2/\vol(V)$ is an {\em a priori} 
estimate of the former quantity
according to the so-called Chung--Lu random graph model
\cite{chung2006complex}.  
The 
Newman--Girvan modularity enjoys nice properties, for example
$Q(C) = Q(\bar C)$, and $Q(V) = 0$.
We point the reader to \cite{fasino2014algebraic}
for an extensive analysis of spectral properties of that matrix $\widehat M$.

A number of different variants of the modularity measure have been considered afterwards, see e.g., \cite{bolla2011penalized, fasino2016generalized, multires1, tudisco2016community}. 
All of them can be expressed as 
$\uno_C^T\widehat M\uno_C$ by means of a suitably defined
(generalized) modularity matrix $\widehat M$.
Within 
this setting, the detection of $k$ largest communities in a prescribed network, or the  
partitioning of its vertex 
set into $k$ pairwise 
disjoint communities translates naturally into the maximization of $\sum_{i=1}^k Q(C_i)$ under appropriate conditions on the family $\mathcal{C} = \{C_1,\ldots,C_k\}$.
A major alternative formulation is related to normalized versions of the modularity,
\begin{equation}   \label{eq:q(C)}   
   q(C) = Q(C)/\mu(C)
\end{equation}
where $Q(C)$ is as before and $\mu(C)$ 
is an additive measure of the set $C$, which is used as a balancing function to promote small node groups. The two most common measures are
$\mu(C) = |C|$ and $\mu(C) 
= \vol(C)$, see e.g., \cite{bolla2011penalized}.
The use of 
the normalized modularity $q(C)$ 
has several advantages 
because of its more reliable underlying algebraic structure
and its ability to localize small communities, despite the known tendency of the standard (unnormalized) modularity measures to overlook small groups 
\cite{fortunato2007resolution, tudisco2016community}. 
We also mention that 
the normalization term 
$\mu(C) = \tau|C| + \vol(C)$
has been introduced in a closely related context to cope with
sparse networks with strong degree heterogeneity
\cite{ChungCOLT2012,QinNIPS2013}; 
there, $\tau > 0$ is a constant to be tuned 
in order to minimize the variance of certain statistical estimators.

Besides communities, another subgraph-level structure 
of interest in network analysis is that of
anti-communities. Probably, the first occurrence of the term
``anti-community'' can be traced back to \cite{Newman-eigenvector}
with reference to vertex sets having $Q(C)\ll 0$.
Roughly speaking, a set of nodes is said to have 
an anti-community structure 
if internal connections are fewer than
those expected by chance.
According to \cite{Anti-modularity},
in an anti-community nodes have most of their connections outside their group and have no or fewer connections with the members within the same group.
For example, an almost bipartite graph
(a bipartite graph possibly containing a few erroneous edges) 
is made of two anti-communities.
These informal definitions can be suitably formalized by 
means of modularity concepts. 
Actually, the detection of a bipartite structure in a given graph
with vertex set $V$ can be handled by the minimization of the Newman--Girvan
modularity over subsets of $V$; owing to the equality
$Q(C)=Q(\bar C)$, if $C$ is a set of minimum modularity then
$\{C,\bar C\}$ gives the bipartite structure of the graph.
In fact, if the subgraph induced by $C$ contains no edges, then 
$\vol(C) = \vol(\bar C) = \vol(V)/2$ and
$$
   Q(C) = -\vol(C)^2/\vol(V) = -\vol(V)/4 ,
$$
which is the lower bound for $Q$.
Further, literature
concerned with anti-community structures addresses
both the detection of almost-bipartitiveness,
see e.g., \cite{Newman-eigenvector,HighamBipartite},
and the presence of multiple, possibly overlapping anti-communi\-ties
\cite{bolla2011penalized,Anti-modularity,Estrada-negative,HighamLockAndKey}.

In the present work we address the problem of detecting simultaneously 
the presence of communities and anti-communities in a given network. 
To that goal, we extend the spectral approach
that has been adopted by various authors in community or
anti-community detection, as mentioned above.
Henceforth, we agree that a
node set $C\subset V$ is an anti-community if 
$q(C) < 0$ and ``large'' (compared to other subsets), 
and a {\em module} if $q(C)^2$ is ``large'',
where $q(C)$ is a normalized modularity function 
as in \eqref{eq:q(C)}.

\section{Modularity-based detection of communities and anti-communities}

Hereafter, we consider the following framework:
we are given a (generalized)
modularity matrix $\widehat M$ 
such that $Q(C) = \uno_C^T\widehat M\uno_C$
and the diagonal matrix 
$W = \mathrm{Diag}(\mu(1),\ldots,\mu(n))$. To 
any subset $C\subseteq V$ we associate
the \textit{measure vector} $\chi_C$ defined as
$$
   \chi_C = W^{1/2}\uno_C / \|W^{1/2}\uno_C\|_2 .
$$
Consequently we have $\chi_C^T \chi_C = 1$ and  
$$ 
   \chi_C^T M \chi_C = q(C) ,
   \qquad M = W^{-1/2}\widehat MW^{-1/2} .
$$

\begin{example}   \label{ex:NG}
Let $Q(C)$ be the Newman--Girvan modularity \eqref{eq:NG}.  
\begin{itemize}

\item
If $\mu(j) = 1$ for all $j$ then  $M = \widehat M$,
$\chi_C = \uno_{C}/|C|^{1/2}$ and
$$
   q(C) = \frac{\uno_{C}^T\widehat M\uno_{C}}{|C|}  
   = \frac{Q(C)}{|C|} ;
$$
\item
if $\mu(j) = d_j$ then $\chi_C = W^{1/2}\uno_{C}/\vol(C)^{1/2}$
and $q(C) = Q(C)/\vol(C)$.
\end{itemize}
A statistics based scrutiny of these two ``penalized'' versions
of $Q(C)$ has been carried out in \cite{bolla2011penalized}.
\end{example}

Consider a family $\mathcal{C} = \{C_1,\ldots, C_k\}$ of pairwise disjoint subsets
of $V$ with $\cup_{i=1}^k C_i \subseteq V$, and let $X = [\chi_1,\ldots,\chi_k]\in\mathbb{O}^{n\times k}$
where $\chi_i$ is the measure
vector of $C_i$. 
The problem of recognizing the $k$ 
largest modules in $G$
can be obviously stated as the 
maximization   
over all such $\mathcal{C}$ 
of the quantity
$$
   \sigma(\mathcal{C}) = \sum_{i=1}^k q(C_i)^2 . 
$$
Owing to the equivalent expressions
$$
   \sigma(\mathcal{C}) 
   = \sum_{i=1}^k (\chi_i^TM\chi_i)^2 
   = \| \mathcal{P}(M,\mathcal{H}(X))\|_\fro^2
   = \| M \|_\fro^2
   \cos(M,\mathcal{H}(X))^2 ,
$$
a continuous relaxation of that combinatorial problem consists 
of  
computing 
\begin{equation}   \label{eq:maxourq}
   X^* = \mathrm{arg}
   \max_{X\in \mathbb{O}^{n\times k}} \cos(M,\mathcal{H}(X)) .
\end{equation}
As recalled in Theorem \ref{thm:XYZ},
any solution of \eqref{eq:maxourq}
is related to the $k$ ``dominant'' eigenvectors of $M$.
In fact, given the spectral decomposition $M = U\Lambda U^T$ 
with $U=[u_1,\dots,u_n]$,
consider the eigenvalues of $M$ 
in nonincreasing moduli:
\begin{equation}   \label{eq:lambdaM}
   |\lambda_1 | \geq |\lambda_2| \geq \cdots \geq |\lambda_n| .
\end{equation}
If $|\lambda_k | > |\lambda_{k+1}|$ then
any solution $X^*$ of \eqref{eq:maxourq} is
an orthogonal transform of $U_k = [u_1,\ldots,u_k]$,  
and the orthogonal projection of $M$ onto 
$\mathcal{H}(X^*)$ is
$$
   H = \mathcal P(M,\mathcal H(X^*))
   =U_k \mathrm{Diag}(\lambda_1,\ldots,\lambda_k) 
   U_k^T .
$$
Note that $\|H \|_\fro^2 = \sum_{i=1}^k \lambda_i^2$ and
$\| M - H \|_\fro^2 = \sum_{i=k+1}^n \lambda_i^2$,
hence
\begin{equation}   \label{eq:maxcos}
   \cos(M,\mathcal{H}(X^*))^2 = 
   \textstyle{(\sum_{i=1}^k \lambda_i^2)/(\sum_{i=1}^n \lambda_i^2)} .
\end{equation}
Thus the presence of a ``good'' set of $k$ modules is 
indicated by the presence of $k$ ``large'' eigenvalues
in the spectrum of $M$.
Moreover, the  
subsets $C_1,\ldots,C_k$ can be recovered
from $U_k$ by virtue of the coincidence of
the projections $\mathcal{P}(M,\mathcal{H}(X^*))$ and
$\mathcal{P}(M,\mathcal{K}(X^*))$
as described 
here below.

Consider the matrix space $\mathcal{K}(X)$ defined in \eqref{eq:defK} with $X = [\chi_1,\ldots,\chi_k]$. Assuming for the moment that $\cup_{i=1}^k C_i = V$,
any matrix in that space has the
following block structure, up to a row and column renumbering:
$$
   K = K^T = 
   \begin{pmatrix} K_{11} & \cdots & K_{1k}\\
   \vdots & \vdots & \vdots \\
   K_{k1} & \cdots & K_{kk} \end{pmatrix} . 
$$
Here, block $K_{ij}$ has order $|C_i|\times |C_j|$ and rank one.
In fact, let $S = X^TKX = (s_{ij})$
and denote by $\hat \chi_i$ the nonzero part 
(i.e., the support) of $\chi_i$.
Then, $K_{ij} = s_{ij} \hat \chi_i\hat \chi_j^T$.
For example, if $\chi_i = \uno_{C_i}/\sqrt{|C_i|}$ then
$K_{ij} = (s_{ij}/\sqrt{|C_i||C_j|})\uno\uno^T$.
Moreover, given the spectral decomposition 
$S = Y^TDY$
where $Y = [y_1,\ldots,y_k]$ is orthogonal,
the matrix $K$ admits the spectral factorization 
$$
   K = X Y^T D Y X^T .
$$
The matrix of eigenvectors relative to nonzero eigenvalues of $K$ can be partitioned as follows:
$$
   XY^T = \begin{pmatrix}
   \hat \chi_1 y_1^T \vspace{2mm}\\ \hat \chi_2 y_2^T\vspace{2mm} \\ \vdots \vspace{2mm}\\ \hat \chi_k y_k^T 
   \end{pmatrix}
   \hspace{-2mm}
   \begin{array}{c}
   \big\} |C_1| \vspace{2mm}\\ \big\} {\tiny |C_2|} \vspace{2mm}\\ \vdots \vspace{2mm}\\ \big\} |C_k| 
   \end{array}
$$
that is, rows with indices in the same $C_i$ are parallel,
while rows belonging to different $C_i$'s are orthogonal.
In particular, if $i\in C_j$ then the norm of the $i$-th row
is $\sqrt{\mu(i)/\mu(C_j)}$, so that 
scaling that row by $1/\sqrt{\mu(i)}$ makes the norm
depend only on the block index $j$.
Thus the partitioning $\mathcal{C} = \{C_1,\ldots,C_k\}$ can be recovered
from $XY^T$ by clustering its rows according to their angles
and lengths.
On the other hand, if $\mathcal{C}$ does not cover $V$
then, up to a row renumbering, 
the block structure of the matrix $K$ above appears in the 
upper left corner of a larger matrix bordered by null rows and columns. Indeed, in this case
$X$ has a trailing block of 
null rows, which remains unchanged by right-multiplication
times orthogonal matrices.
Those null rows indicate nodes not belonging to any $C_i$'s, 
while the nonzero rows convey the same information as before.

Clearly, if $\cos(M, \mathcal H(X))=1$ then $M\in \mathcal H(X)\subseteq \mathcal K(X)$, 
$U_k=XY^T$
and the modules of $\mathcal{C}$
are inscribed in 
the rows of $U_k$.
In a more realistic setting,
by a continuity argument if $\cos(M,\mathcal{H}(X))$
does not exactly equal one but is ``close to one'' then $U_k$ can be regarded
as a perturbation of $XY^T$
in the sense that there is an orthogonal matrix $Z$
such that $U_k - XZ$ has small norm.
Precise statements can be obtained from Davis--Kahan 
``$\sin\theta$'' theorem \cite[Thm.\ 3.6]{MR1061154};
the ensuing Corollary \ref{cor:sintheta}
provides a result in that spirit.

Hence, a spectral method to locate
$k$ ``dominant'' modules of a given graph or network,
possibly not covering the whole vertex set, can be based on
the computation of $k$ extreme eigenpairs
of $M$, and the use of 
a clustering algorithm to group rows of $U_k$ according to 
their position in $\mathbb{R}^k$. 
For that task, a popular idea in the clustering literature 
is to adopt the $k$-means algorithm, see e.g., 
\cite{bolla2011penalized,QinNIPS2013,SpectralClustering}.
However, since that algorithm always returns a partitioning
which covers $V$, we
keep in view that the number of disjoint subsets
can be equal to the number of considered eigenvalues plus one; 
the supplemental set can be interpreted as a ``background''
in the graph from where the dominant modules emerge.
This possibility is substantiated in the analysis of 
the stochastic block model in section \ref{sec:SBM}
and illustrated by a couple of numerical examples in 
section \ref{sec:numerical}.

\section{Analysis of the method}

The purpose of this section is to provide 
theoretical foundation to the foregoing spectral algorithm.
In particular, we prove that if the angle
between $M$ and $\mathcal{H}(X)$ is ``small'' then
$k$ leading eigenvalues of $M$ are well separated from
the others, their values are close
to the numbers $q(C_1),\ldots,q(C_k)$,  
and the span of the corresponding 
eigenvectors is close to that of $X$.   
To that goal, we extend the technique introduced in
\cite{PerFiedler} to localize a single dominant eigenvalue
of a symmetric matrix.

\subsection{Eigenvalues}

Equation \eqref{eq:maxcos} yields an attainable upper bound
on the cosine between $M$ and any space $\mathcal{H}(X)$.
In the following theorem we prove a refined upper bound 
that provides a stronger indication on the dominance of
the first $k$ eigenvalues of $M$.

\begin{theorem}   \label{thm:sigma_approx}
Let $\{C_1,\dots, C_k\}$ be a family of 
pairwise disjoint subsets of $V$, ordered so that $|q(C_1)|\geq \dots \geq |q(C_k)|$, and let
$X = [\chi_1,\ldots,\chi_k]$.
Let the eigenvalues of $M$ be denoted as in \eqref{eq:lambdaM}.
If $s = \sin(M,\mathcal{H}(X))$ and 
$c = \cos(M,\mathcal{H}(X))$ then
$$
   \sum_{i=1}^k \lambda_i^2 \geq 
   c^2\sum_{i=1}^n \lambda_i^2 + 
   \sum_{i=1}^k (|\lambda_i|-|q(C_i)|)^2 .
$$
Moreover, 
$$
   \sum_{i=1}^k (|\lambda_i|-|q(C_i)|)^2
   \leq s^2\sum_{i=1}^k\lambda_i^2 .
$$
\end{theorem}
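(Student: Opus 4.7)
The plan is to bound $\|M-H\|_\fro^2$ from two directions, where $H=\mathcal{P}(M,\mathcal{H}(X))$ is the Frobenius-orthogonal projection of $M$ onto $\mathcal{H}(X)$. The key observation is that, since $\{C_1,\dots,C_k\}$ are pairwise disjoint, the vectors $\chi_1,\dots,\chi_k$ are orthonormal, so the explicit formula for $\mathcal{P}(A,\mathcal{H}(X))$ given earlier yields
$H=\sum_{i=1}^k q(C_i)\chi_i\chi_i^T$,
which is actually a spectral decomposition of $H$. Hence $H$ has nonzero eigenvalues $q(C_1),\dots,q(C_k)$ and its singular values in nonincreasing order are exactly $|q(C_1)|\geq\dots\geq|q(C_k)|\geq 0=\dots=0$, using the ordering hypothesis. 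Since $M$ is symmetric, $s_i(M)=|\lambda_i|$ in nonincreasing order.

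Next I would apply the Hoffman--Wielandt theorem (Theorem \ref{thm:hoffman-wielandt-sv}) to $M$ and $H$. The pairing of singular values gives
$$
\sum_{i=1}^k(|\lambda_i|-|q(C_i)|)^2 + \sum_{i=k+1}^n \lambda_i^2 \leq \|M-H\|_\fro^2.
$$
Because $H$ is the Frobenius projection of $M$, Pythagoras yields $\|M-H\|_\fro^2=\|M\|_\fro^2-\|H\|_\fro^2$. Using the orthonormality of the $\chi_i$'s,
$\|H\|_\fro^2 = \sum_{i=1}^k q(C_i)^2$, and by definition of $c$, $\|H\|_\fro^2 = c^2\|M\|_\fro^2 = c^2\sum_{i=1}^n\lambda_i^2$. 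Substituting and cancelling $\sum_{i=k+1}^n\lambda_i^2$ from both sides yields
$$
\sum_{i=1}^k(|\lambda_i|-|q(C_i)|)^2 \leq \sum_{i=1}^k\lambda_i^2 - \sum_{i=1}^k q(C_i)^2,
$$
and a rearrangement delivers the first claim.

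For the second inequality, I would use the elementary bound $\sum_{i=1}^k\lambda_i^2\leq\sum_{i=1}^n\lambda_i^2$ to obtain
$c^2\sum_{i=1}^k\lambda_i^2 \leq c^2\sum_{i=1}^n\lambda_i^2 = \sum_{i=1}^k q(C_i)^2$.
Rearranging gives $\sum_{i=1}^k\lambda_i^2-\sum_{i=1}^k q(C_i)^2 \leq (1-c^2)\sum_{i=1}^k\lambda_i^2 = s^2\sum_{i=1}^k\lambda_i^2$, and combining with the previous displayed inequality yields the second claim.

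The main technical point—and the one where I would be careful—is to match correctly the singular values of $M$ and $H$ in the Hoffman--Wielandt pairing; this is precisely what requires the hypothesis that the $C_i$'s are arranged so that $|q(C_1)|\geq\dots\geq|q(C_k)|$. Everything else is identity manipulation using the Frobenius-orthogonality of the projection and the fact that $\chi_1,\dots,\chi_k$ form an orthonormal system thanks to the pairwise disjointness of the sets $C_i$.
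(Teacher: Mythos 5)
Your proposal is correct and follows essentially the same route as the paper: both apply the Hoffman--Wielandt theorem for singular values to $M$ and its projection $H=\mathcal{P}(M,\mathcal{H}(X))$, whose singular values are $|q(C_1)|\geq\dots\geq|q(C_k)|\geq 0$, and then combine the resulting inequality with the Pythagorean identity $\|M-H\|_\fro^2=s^2\|M\|_\fro^2=\|M\|_\fro^2-c^2\|M\|_\fro^2$. Your derivation of the second claim (discarding the nonnegative term $c^2\sum_{i=k+1}^n\lambda_i^2$) is the same algebraic step the paper performs, merely written in terms of $c^2$ rather than $1-s^2$.
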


\begin{proof}
Note that the eigenvalues of $H = \mathcal P(M,\mathcal{H}(X))$ are the numbers $q(C_i)$,
and the singular values of $M$ and $H$ are the absolute values
of their respective eigenvalues. Thus, using 
Theorem \ref{thm:hoffman-wielandt-sv}, we get 
\begin{equation}\label{eq:a}
    s^2 \sum_{i=1}^n \lambda_i^2 = 
    s^2 \|M\|_\fro^2 = 
    \|M-H\|_\fro^2 \geq \sum_{i=k+1}^n
    \lambda_i^2 + \sum_{i=1}^k (|\lambda_i|-|q(C_i)|)^2 .
\end{equation}
With simple manipulations we get 
$$
   \sum_{i=1}^k \lambda_i^2 \geq
   (1-s^2) \sum_{i=1}^n \lambda_i^2 + 
   \sum_{i=1}^k (|\lambda_i|-|q(C_i)|)^2 ,
$$
proving the first claim. Similarly, from \eqref{eq:a} we obtain 
$$ 
   \sum_{i=1}^k (|\lambda_i|-|q(C_i)|)^2
   \leq s^2 \sum_{i=1}^k \lambda_i^2 - (1-s^2)  
   \sum_{i=k+1}^n\lambda_i^2  
   \leq s^2  \sum_{i=1}^k\lambda_i^2
$$
proving the second claim as well. 
\end{proof}

\begin{remark}
Simple arguments show the equality 
$$
   \frac{(|\lambda_i|-|q(C_i)|)^2}{\lambda_i^2} = 
   \left| 1 - \frac{q(C_i)}{\lambda_i} \right|^2 . 
$$
Hence, if for $i = 1,\dots,k$ we have 
$| 1 - q(C_i)/\lambda_i | > s$ then
$$
   \sum_{i=1}^k (|\lambda_i|-|q(C_i)|)^2 > 
   s^2 \sum_{i=1}^k \lambda_i^2
$$
and the inequality in the last claim of Theorem \ref{thm:sigma_approx} cannot hold. 
Thus, in the hypotheses of that theorem, for at least some
indices $i$ we must have
$1-s \leq q(C_i)/\lambda_i \leq 1+s$.
\end{remark}

\subsection{Eigenspaces}

Our next aim is to evaluate quantitatively the
closeness of $\chi_1,\ldots,\chi_k$ to the 
eigenspace associated to the first
$k$ eigenvalues of $M$. To quantify 
the departure of $\mathrm{Range}(X)$ from
that eigenspace  
we resort to a classical metric among subspaces in $\mathbb{R}^n$
and a particular inequality in the spirit of
Davis--Kahan ``$\sin\theta$'' theorem \cite[Thm.\ 3.6]{MR1061154}.

Let $X,Y\in\mathbb{O}^{n\times k}$ 
and let $\mathcal{X} = \mathrm{Range}(X)$
and $\mathcal{Y} = \mathrm{Range}(Y)$.
Recall that the singular values of $X^TY$ are the 
cosines of the principal angles between $\mathcal{X}$
and $\mathcal{Y}$. These angles quantify the deviation
of each subspace from the other. Moreover,
if $X_\perp$ is an $n\times (n-k)$ matrix such that 
$[X,X_\perp]$ is orthogonal, then the singular values of
$X_\perp^TY$ are the sines of the same angles.
In fact, metrics in the set
of all $k$-dimensional subspaces of $\mathbb{R}^n$ 
are usually defined by means of an unitarily invariant matrix norm
as $\mathrm{dist}(\mathcal{X},\mathcal{Y})
= \|X_\perp^TY\|$, see e.g., \cite[Thm.\ 4.9]{MR1061154}.

\begin{theorem}
Let $M = U\Lambda U^T$ be a spectral decomposition of $M$,
and consider the partitioning $U = [U_1,U_2]$ where
$U_1\in\mathbb{O}^{n\times k}$ is made by unit eigenvectors associated to
the $k$ leading eigenvalues of $M$. 
If $X = [\chi_1,\ldots,\chi_k]$ and $X_\perp$ are as before
then
$$
   \|U_1^TX_\perp\|_\fro^2 \leq
   s^2 \frac{\|M\|_\fro^2}{\lambda_k^2} -
   \sum_{i=1}^k
   \frac{\| M\chi_i - q(C_i)\chi_i\|_2^2}{\lambda_k^2} ,   
$$
where $s=\sin(M,\mathcal{H}(X))$. 
\end{theorem}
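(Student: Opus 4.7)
The plan is to adapt the residual-based Davis--Kahan $\sin\theta$ argument. I would first introduce $H = \mathcal{P}(M,\mathcal{H}(X)) = \sum_{i=1}^k q(C_i)\chi_i\chi_i^T$ (using the explicit projection formula from Section~1.2 together with $\chi_i^T M \chi_i = q(C_i)$), so that by the very definition of the angle $\|M-H\|_\fro^2 = s^2\|M\|_\fro^2$.

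The central observation is that $H$ annihilates $\mathrm{Range}(X_\perp)$: each $\chi_i$ is a column of $X$ and $X^T X_\perp = 0$, whence $HX_\perp = 0$ and therefore $MX_\perp = (M-H)X_\perp$. I would then split by the orthogonal decomposition induced by the full orthogonal matrix $[X,X_\perp]$,
\[
\|M-H\|_\fro^2 = \|(M-H)X\|_\fro^2 + \|(M-H)X_\perp\|_\fro^2 ,
\]
and, using $(M-H)\chi_j = M\chi_j - q(C_j)\chi_j$ together with the orthonormality of the $\chi_j$, identify the first term on the right as $\sum_{j=1}^k \|M\chi_j - q(C_j)\chi_j\|_2^2$. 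Rearranging this yields
\[
\|MX_\perp\|_\fro^2 = s^2\|M\|_\fro^2 - \sum_{j=1}^k \|M\chi_j - q(C_j)\chi_j\|_2^2 .
\]

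The remaining step is to transfer this estimate to $U_1^T X_\perp$ by exploiting that $U_1$ consists of eigenvectors, so that $U_1^T M = \Lambda_1 U_1^T$ with $\Lambda_1 = \mathrm{Diag}(\lambda_1,\ldots,\lambda_k)$. Because the spectrum is ordered by decreasing modulus, $\lambda_i^2 \geq \lambda_k^2$ for $i \leq k$, and a row-by-row calculation gives $\|\Lambda_1 (U_1^T X_\perp)\|_\fro^2 \geq \lambda_k^2 \|U_1^T X_\perp\|_\fro^2$. Combined with the elementary bound $\|U_1^T M X_\perp\|_\fro \leq \|MX_\perp\|_\fro$ (since $U_1 U_1^T$ is an orthogonal projector), this produces
\[
\lambda_k^2\,\|U_1^T X_\perp\|_\fro^2 \;\leq\; \|M X_\perp\|_\fro^2 ,
\]
and dividing by $\lambda_k^2$ delivers the claimed inequality.

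Everything is elementary once one spots $HX_\perp = 0$; the only point requiring a little care is the lower estimate $\|\Lambda_1 Y\|_\fro^2 \geq \lambda_k^2 \|Y\|_\fro^2$, which relies on the modulus ordering convention and implicitly on $\lambda_k \neq 0$. I do not foresee any serious obstacle.
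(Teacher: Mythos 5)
Your argument is correct and follows essentially the same route as the paper's proof: both hinge on $HX_\perp=0$, the Pythagorean splitting $\|M-H\|_\fro^2=\|(M-H)X\|_\fro^2+\|(M-H)X_\perp\|_\fro^2$, and the bound $\|\Lambda_1 U_1^TX_\perp\|_\fro\geq|\lambda_k|\,\|U_1^TX_\perp\|_\fro$ combined with $\|U_1^T(M-H)X_\perp\|_\fro\leq\|(M-H)X_\perp\|_\fro$. The only cosmetic difference is that you pass through $MX_\perp=(M-H)X_\perp$ before projecting onto $U_1$, whereas the paper applies $U_1^T$ directly; the content is identical.
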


\begin{proof}
Let $H = \mathcal{P}(M,\mathcal{H}(X))$ and
$\Lambda_1 = U_1^TMU_1 = \mathrm{Diag}(\lambda_1,\ldots,\lambda_k)$.
By construction $HX_\perp = O$. Consequently,
$$
   U_1^T(M-H)X_\perp = 
   U_1^TMX_\perp = \Lambda_1U_1^TX_\perp .
$$
Moreover,   
$$
   \| (M-H)X_\perp \|_\fro \geq 
   \| U_1^T(M-H)X_\perp \|_\fro  
   = \| \Lambda_1U_1^TX_\perp\|_\fro \geq 
   |\lambda_k| \|U_1^TX_\perp\|_\fro ,
$$
the last inequality coming from
$\|U_1^TX_\perp\|_\fro\leq \|\Lambda_1^{-1}\|_2 \|\Lambda_1U_1^TX_\perp\|_\fro$.
By the orthogonality condition $X^TX_\perp = O$ we also obtain
\begin{align*}
   \| (M-H)X_\perp \|_\fro^2 & = 
   \| M-H\|_\fro^2 - \| (M-H)X\|_\fro^2 \\
   & = \sin^2(M,\mathcal{H}(X)) \|M\|_\fro^2 -
   \sum_{i=1}^k \| M\chi_i - q(C_i) \chi_i\|_2^2 .
\end{align*}
Collecting all results we complete the proof.
\end{proof}

\begin{corollary}   \label{cor:sintheta}
In the notations of the previous theorem,
there exists a $k\times k$ orthogonal matrix $Z$ such that
$$
   \| U_1 - XZ \|_\fro \leq
   \frac{\sqrt{2}\, \|M\|_\fro}{|\lambda_k|} 
   \sin(M,\mathcal{H}(X)) .
$$
\end{corollary}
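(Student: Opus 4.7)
The plan is to deduce this $\sin\theta$-type corollary from the bound on $\|U_1^T X_\perp\|_\fro^2$ established in the previous theorem, via a standard orthogonal-Procrustes argument. First, I would take the singular value decomposition of the $k\times k$ matrix $X^T U_1$, writing it as $V\Sigma W^T$ with $V, W$ orthogonal and $\Sigma = \mathrm{Diag}(\cos\theta_1,\ldots,\cos\theta_k)$, where $\theta_1,\ldots,\theta_k \in [0,\pi/2]$ are the principal angles between $\mathrm{Range}(X)$ and $\mathrm{Range}(U_1)$. I would then define $Z = V W^T$, which is well known to minimize $\|U_1 - XZ\|_\fro$ among all orthogonal $k\times k$ matrices. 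Expanding the squared norm and using $U_1^T U_1 = Z^T X^T X Z = I_k$ yields
$$
   \|U_1 - XZ\|_\fro^2 = 2k - 2\,\mathrm{trace}(Z^T X^T U_1) = 2\sum_{i=1}^k (1-\cos\theta_i) .
$$

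Next, I would invoke the elementary inequality $1 - \cos\theta \leq \sin^2\theta$, valid on $[0,\pi/2]$ since $\sin^2\theta = (1-\cos\theta)(1+\cos\theta)$, to obtain
$$
   \|U_1 - XZ\|_\fro^2 \leq 2\sum_{i=1}^k \sin^2\theta_i = 2\,\|U_1^T X_\perp\|_\fro^2 ,
$$
where the last equality uses the classical identification of the singular values of $U_1^T X_\perp$ with the sines of the principal angles between $\mathrm{Range}(X)$ and $\mathrm{Range}(U_1)$. Applying the previous theorem and discarding the nonnegative subtracted term gives $\|U_1^T X_\perp\|_\fro^2 \leq s^2 \|M\|_\fro^2 / \lambda_k^2$. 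Taking square roots delivers the claimed bound.

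The only nontrivial step is the construction of $Z$ so that $\mathrm{trace}(Z^T X^T U_1)$ equals the sum of singular values of $X^T U_1$; this is precisely the orthogonal Procrustes lemma, and it is what allows the constant $\sqrt{2}$ in the statement rather than a larger one. Everything else amounts to elementary trigonometry together with the $\sin\theta$-style bound proved just before.
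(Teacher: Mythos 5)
Your proof is correct and follows essentially the same route as the paper: the paper simply cites Stewart--Sun (Thm.\ 4.11) for the key inequality $\min_{Z^TZ=I}\|U_1-XZ\|_\fro\le\sqrt{2}\,\|U_1^TX_\perp\|_\fro$ and then applies the preceding theorem, discarding the nonnegative subtracted term, exactly as you do. The only difference is that you supply a self-contained orthogonal-Procrustes derivation (via $1-\cos\theta\le\sin^2\theta$ and the principal-angle identification of the singular values of $U_1^TX_\perp$) of the inequality the paper merely cites, and that derivation is sound.
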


\begin{proof}
From \cite[Thm. 4.11]{MR1061154} we get
$$
    \min_{Z^TZ = I} \| U_1 - XZ \|_\fro 
    \leq \sqrt{2} \|U_1^TX_\perp\|_\fro ,
$$
and the claim follows from the previous theorem.
\end{proof}

The previous results show that if the angle
between $M$ and $\mathcal{H}(X)$ is small then not only $\mathrm{Range}(X)$ is
close to $\mathrm{Range}(U_1)$ but also the residuals
$\| M\chi_i - q(C_i) \chi_i\|_2$ must be small.

\section{Modularity analysis of the stochastic block model}
\label{sec:SBM}

\newcommand{\ttimes}{\hbox{$\times\hspace{-1mm}\times$}}

A Stochastic Block Model (SBM) with $n$ nodes and $k$ blocks
is a random graph model parametrized by the 
{\em membership matrix} $\Theta\in\{0,1\}^{n\times k}$
and the symmetric {\em connectivity matrix} $B=(b_{ij})\in\mathbb{R}^{k\times k}$. Every row of the matrix $\Theta$ contains exactly one nonzero entry, whose position indicates which block that node belongs to; and the nonzero entries in the $i$-th column 
indicate nodes belonging to the $i$-th block.   
For visual convenience, we assume that each 
block consists of consecutive integers.
The entry $b_{ij}$ is the edge probability
between any node in block $i$ and any node in block $j$.
Edges are generated independently from one another.

The SBM is one of the most widespread generative models   
for random graphs 
and is widely used as a theoretical
benchmark for graph partitioning and community
detection algorithms, see e.g., \cite{ChungCOLT2012,KarrerNewman2011,QinNIPS2013}. 

Suppose that the $i$-th block has $n_i$ elements,
$n_1+\cdots +n_k = n$.
Hence, the average adjacency matrix within the
SBM with parameters $(\Theta,B)$ is 
\begin{equation}   \label{eq:SBM}
   \bar A = \Theta B\Theta^T = \begin{pmatrix}
   \bar A_{11} & \cdots & \bar A_{1k} \\ \vdots & & \vdots \\
   \bar A_{k1} & \cdots & \bar A_{kk}   
   \end{pmatrix} , 
   \qquad
   \bar A_{ij} = b_{ij}\uno\uno^T\in\mathbb{R}^{n_i\times n_j} .
\end{equation}

\begin{lemma}   \label{lem:M_SBM}
The average Newman--Girvan modularity matrix \eqref{eq:NG} 
of the SBM with parameters $(\Theta,B)$ is 
$$
   \bar M = \begin{pmatrix}
   \bar M_{11} & \cdots & \bar M_{1k} \\ \vdots & & \vdots \\
   \bar M_{k1} & \cdots & \bar M_{kk} \end{pmatrix} ,
   \qquad
   \bar M_{ij} = (b_{ij}-\bar d_i\bar d_j/\nu)\uno\uno^T\in\mathbb{R}^{n_i\times n_j} ,
$$
where $\bar d = \bar A\uno$ and $\nu = \uno^T\bar d$.
\end{lemma}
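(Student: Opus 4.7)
The lemma is essentially a block-matrix computation, once the notation is unpacked. The key observation is that, by \textquotedblleft average Newman--Girvan modularity matrix\textquotedblright, we mean the matrix obtained by substituting the expected adjacency matrix $\bar A = \mathbb{E}[A]$ into formula \eqref{eq:NG}, namely
\[
   \bar M = \bar A - \bar d\, \bar d^T\!/\nu, \qquad
   \bar d = \bar A\uno, \qquad \nu = \uno^T \bar d.
\]
Once this is made explicit, the proof proceeds by direct computation using the block structure \eqref{eq:SBM}.

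The plan is as follows. First I would exploit the fact that $\bar A$ is block-constant: the $(i,j)$-block equals $b_{ij}\uno\uno^T$ of size $n_i\times n_j$. Multiplying $\bar A$ on the right by the all-ones vector $\uno\in\mathbb{R}^n$, the $i$-th block of $\bar d = \bar A\uno$ becomes the constant vector $\bigl(\sum_{j=1}^k b_{ij} n_j\bigr)\uno$. In particular, $\bar d$ is block-constant, so I can legitimately denote by $\bar d_i$ the common value of the entries of $\bar d$ indexed by nodes in block $i$; this is the slight notational abuse implicit in the lemma statement, where $\bar d_i$ plays the role of \textquotedblleft the average degree of a node in block $i$\textquotedblright{} rather than a coordinate of $\bar d$.

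Next I would compute $\nu = \uno^T\bar d = \sum_{i=1}^k n_i \bar d_i$, which is simply the sum of the block degrees weighted by block sizes. The rank-one correction $\bar d\,\bar d^T/\nu$ inherits the same block structure as $\bar A$: its $(i,j)$-block is the $n_i\times n_j$ matrix whose entries are all equal to $\bar d_i \bar d_j/\nu$, hence it equals $(\bar d_i\bar d_j/\nu)\,\uno\uno^T$. Subtracting from $\bar A$ block by block yields
\[
   \bar M_{ij} = b_{ij}\uno\uno^T - (\bar d_i\bar d_j/\nu)\uno\uno^T
              = (b_{ij} - \bar d_i\bar d_j/\nu)\uno\uno^T,
\]
which is exactly the claimed expression.

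There is no real obstacle; the only point requiring mild care is the convention on \textquotedblleft average modularity matrix.\textquotedblright{} Since $\widehat M = A - dd^T/\vol(V)$ is nonlinear in the random matrix $A$, the entrywise expectation $\mathbb{E}[\widehat M]$ does not coincide with $\bar A - \bar d\bar d^T/\nu$; however, the standard convention in the SBM literature (and the one consistent with the developments later in Section~\ref{sec:SBM}) is to analyse the deterministic matrix obtained by plugging the expected adjacency matrix into \eqref{eq:NG}, and the lemma states exactly this object. After that convention is fixed, the remainder is the elementary block computation sketched above.
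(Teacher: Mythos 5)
Your proposal is correct and follows essentially the same route as the paper: identify $\bar d=\bar A\uno$ blockwise, observe that the average modularity matrix is (by convention) $\bar A-\bar d\,\bar d^T\!/\nu$, and read off the blocks of the rank-one correction. The only cosmetic difference is that the paper pins down $\nu$ by imposing $\bar M\uno=0$ (a property of every Newman--Girvan modularity matrix), whereas you take $\nu=\uno^T\bar d$ directly from \eqref{eq:NG}; the two give the same value, and your explicit remark that $\mathbb{E}[\widehat M]$ is not literally $\bar A-\bar d\bar d^T\!/\nu$ makes the convention clearer than the paper's appeal to ``linearity of the expectation.''
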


\begin{proof}
By construction, the vector $\bar d = \bar A\uno$ 
with $\bar A$ given by \eqref{eq:SBM} is the average degree vector for 
a random graph extracted from the considered model. 
In particular, if node $i$ belongs to block $\ell$ then
$\bar d_i = \sum_{j=1}^k b_{\ell j}n_j$.
By linearity of the expectation, 
$\bar M = \bar A - \bar d\, \bar d^T\!/\nu$  
for some scalar $\nu$.
Since the equation $M\uno = 0$ holds true for every 
Newman--Girvan modularity matrix, we must set
$\bar M\uno = 0$.
That equation produces the value for $\nu$
indicated above, and the proof is complete.
\end{proof}

We borrow from \cite{FHS} the definition of inflation product of two matrices.

\begin{definition}
Let $A\in\mathbb{R}^{n\times n}$ be a matrix partitioned 
in $k\times k$ block form, with (nontrivial) square diagonal blocks 
having possibly different sizes,   
and let $B=(b_{ij})\in\mathbb{R}^{k\times k}$. The 
{\em inflation matrix} of $B$ with respect to $A$ is
the $n\times n$ block matrix
$$
   B \ttimes A = 
   \begin{pmatrix}
   b_{11} A_{11} & \cdots & b_{1k} A_{1k} \\ \vdots & & \vdots \\
   b_{k1} A_{k1} & \cdots & b_{kk} A_{kk}  
   \end{pmatrix} .
$$
\end{definition} 

The notation $B\ttimes A$ 
does not mention explicitly the partitioning of $A$ on which the
result depends; that partitioning should be clear from the context.
In this section, we always refer to the block partitioning 
appearing in \eqref{eq:SBM}.
Moreover, note that the operator $\ttimes$ defined above is 
a special case of the Khatri--Rao product of 
two block matrices, see e.g., \cite[\S 12.3.3]{GVL},
and is closely related to the Kronecker product $\otimes$; indeed, when $n_1=\ldots = n_k = n/k$ and all blocks of $A$ are equal to $Z$ then $B\ttimes A = B\otimes Z$.
For notational convenience,
we extend the operator $\ttimes$ to vectors as follows:
if $w\in\mathbb{R}^n$ is a vector partitioned into
$k$ (nontrivial) sub-vectors as 
$w = (w_1^T,\ldots,w_k^T)^T$    
and $v= (v_1,\ldots,v_k)^T\in\mathbb{R}^k$ then
$$
   v \ttimes w = 
   (v_1w_1^T,\ldots,v_kw_k^T)^T \in\mathbb{R}^n .   
$$

The following result is a simple case of more general results shown in Lemma 4.17 and Lemma 4.19 of \cite{FHS}; we refrain from including the rather technical proof.
    
\begin{lemma}   \label{lem:eig_xx}
Let $w\in\mathbb{R}^{n}$ be a vector with 
no zero entries,
and let $B\in\mathbb{R}^{k\times k}$ be a symmetric matrix with spectral decomposition $B = V\Lambda V^T$
where $V = [v_1\ldots,v_k]$ and
$\Lambda = \mathrm{Diag}(\lambda_1,\ldots,\lambda_k)$.
Suppose that $w$ is partitioned into $k$ sub-vectors,
$$
   w = (w_1^T,\ldots,w_k^T)^T ,  
   \qquad w_i \in\mathbb{R}^{n_i} ,
$$
such that $w_i^Tw_i = 1$ for $i = 1,\ldots,k$.
Then 
\begin{equation}   \label{eq:xx}
    B\ttimes ww^T = \sum_{i=1}^k
    \lambda_i v_iv_i^T\ttimes ww^T .
\end{equation}
\end{lemma}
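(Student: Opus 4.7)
The plan is to reduce \eqref{eq:xx} to the linearity of the inflation operator $B\mapsto B\ttimes ww^T$ in its first argument, and apply it to the spectral decomposition $B=\sum_\ell \lambda_\ell v_\ell v_\ell^T$.

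First I would verify the linearity directly from the definition: for a fixed partitioned vector $w=(w_1^T,\ldots,w_k^T)^T$, the $(i,j)$-block of $B\ttimes ww^T$ equals $b_{ij}w_iw_j^T$, which depends linearly on the scalar entry $b_{ij}$, hence linearly on $B$ as a whole. Consequently $(\alpha B+\beta B')\ttimes ww^T=\alpha(B\ttimes ww^T)+\beta(B'\ttimes ww^T)$ for arbitrary scalars $\alpha,\beta$ and symmetric $B,B'$ of compatible size.

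Applying this linearity to $B=\sum_{\ell=1}^k \lambda_\ell v_\ell v_\ell^T$ then yields \eqref{eq:xx} at once. Note that so far the normalization hypothesis $w_i^Tw_i=1$ has played no role whatsoever: the identity is purely algebraic.

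The normalization hypothesis is what upgrades \eqref{eq:xx} to a genuine spectral decomposition of $B\ttimes ww^T$, which is presumably how the result is used in the sequel. Indeed, a block-by-block inspection gives $v_\ell v_\ell^T\ttimes ww^T=(v_\ell\ttimes w)(v_\ell\ttimes w)^T$, and
$$
(v_\ell\ttimes w)^T(v_m\ttimes w)=\sum_{i=1}^k (v_\ell)_i(v_m)_i\, w_i^Tw_i = v_\ell^T v_m = \delta_{\ell m},
$$
so the vectors $u_\ell:=v_\ell\ttimes w$ form an orthonormal eigensystem of $B\ttimes ww^T$ with eigenvalues $\lambda_\ell$. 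I do not expect any substantive obstacle in carrying this out; the authors' remark that the more general statements in \cite{FHS} have a rather technical proof presumably refers to the wider class of right-hand factors considered there, whereas the present rank-one setting is handled entirely by the linearity observation above.
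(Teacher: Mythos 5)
Your proof is correct. Note, however, that the paper itself gives no proof of Lemma \ref{lem:eig_xx}: it defers to the more general Lemmas 4.17 and 4.19 of \cite{FHS} and explicitly declines to reproduce the ``rather technical'' argument. What you supply is therefore a genuinely different (and self-contained) route: you observe that $B\mapsto B\ttimes ww^T$ is linear in $B$ because the $(i,j)$-block of the result is $b_{ij}w_iw_j^T$, and then simply apply this linearity to $B=\sum_\ell \lambda_\ell v_\ell v_\ell^T$; the identity \eqref{eq:xx} follows with no use of either the normalization $w_i^Tw_i=1$ or the no-zero-entries hypothesis (both are inherited from the more general setting of \cite{FHS} and are harmlessly redundant for the bare identity). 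Your second step --- checking that $v_\ell v_\ell^T\ttimes ww^T=(v_\ell\ttimes w)(v_\ell\ttimes w)^T$ and that the vectors $v_\ell\ttimes w$ are orthonormal precisely because $w_i^Tw_i=1$ --- is exactly the content of the remark the authors place immediately after the lemma, so your account of where each hypothesis is actually used matches the paper's. What the citation to \cite{FHS} buys is a statement valid for inflation products $B\ttimes A$ with general block factors $A$; what your argument buys is a two-line elementary proof adequate for the rank-one case actually needed in Theorem \ref{thm:SBM}.
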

Observe that equation \eqref{eq:xx} is actually a spectral decomposition. Indeed, $v_iv_i^T \ttimes ww^T$ is a rank-one matrix that can be written also as $z_iz_i^T$ with 
$z_i = v_i \ttimes w$; furthermore, $z_i^Tz_i = 1$
and $z_i^Tz_j = 0$ for $i\neq j$.

\begin{theorem}   \label{thm:SBM}
Let $\bar M$ be the average Newman--Girvan
modularity matrix of the SBM with parameters $(\Theta,B)$.
Let $N = \mathrm{Diag}(n_1^{1/2},\ldots,n_k^{1/2})$, 
and let $\delta = (\delta_1,\ldots,\delta_k)^T$
where $\delta_i = \sum_{j=1}^k b_{ij}n_j$ for $i = 1,\ldots,k$.
Then the nonzero eigenvalues of $\bar M$ coincide with
the nonzero eigenvalues of the $k\times k$ matrix 
$$
   N (B - \delta\delta^T\!/\nu)N , 
   \qquad \nu = \sum_{i,j}b_{ij}n_in_j .
$$
Furthermore, let $(\lambda_i,v_i)$ be an eigenpair of that matrix
and let $z_i = N^{-1}v_i\ttimes\uno$.
Then $\bar Mz_i = \lambda_iz_i$.
\end{theorem}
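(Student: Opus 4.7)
The plan is to recognize $\bar M$ as an inflation product and then invoke Lemma~\ref{lem:eig_xx}. From Lemma~\ref{lem:M_SBM} every block $\bar M_{ij}$ equals $(b_{ij} - \bar d_i \bar d_j/\nu)\uno\uno^T$, and since $\bar d_r = \delta_\ell$ whenever node $r$ belongs to block $\ell$, the scalar in front of $\uno\uno^T$ is precisely the $(i,j)$ entry of $B' = B - \delta\delta^T/\nu$. Hence, with the partitioning inherited from \eqref{eq:SBM},
$$
   \bar M = B' \ttimes \uno\uno^T .
$$
This is not yet in the form required by Lemma~\ref{lem:eig_xx}, because the diagonal blocks $\uno_{n_i}\uno_{n_i}^T$ of $\uno\uno^T$ are not outer products of unit vectors.

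To fix this I would rescale. Let $w\in\mathbb{R}^n$ be the vector partitioned into $k$ sub-vectors with $w_i = \uno_{n_i}/\sqrt{n_i}$, so that $w_i^T w_i = 1$. Since $\uno_{n_i}\uno_{n_j}^T = \sqrt{n_i n_j}\, w_iw_j^T$, pushing the factors $\sqrt{n_i n_j}$ into the scalar coefficients gives
$$
   \bar M = \tilde B \ttimes ww^T , \qquad \tilde B = N B' N = N(B-\delta\delta^T/\nu)N ,
$$
where $N = \mathrm{Diag}(\sqrt{n_1},\ldots,\sqrt{n_k})$.

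Now I can apply Lemma~\ref{lem:eig_xx} to this pair. Writing the spectral decomposition $\tilde B = V\Lambda V^T$ with $V = [v_1,\ldots,v_k]$, the lemma yields
$$
   \bar M = \sum_{i=1}^k \lambda_i\, v_iv_i^T \ttimes ww^T = \sum_{i=1}^k \lambda_i z_iz_i^T ,
$$
with $z_i = v_i \ttimes w$ orthonormal. Because $\bar M = \Theta B'\Theta^T$ has rank at most $k$, the $\lambda_i$ exhaust all its nonzero eigenvalues, proving the first assertion. Finally, I identify $z_i$ with $N^{-1}v_i \ttimes \uno$: the $r$-th entry of $v_i \ttimes w$ with $r$ in block $\ell$ is $(v_i)_\ell/\sqrt{n_\ell}$, and the same entry of $N^{-1}v_i \ttimes \uno$ is $(N^{-1}v_i)_\ell \cdot 1 = (v_i)_\ell/\sqrt{n_\ell}$, so the two vectors coincide.

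The one step that requires genuine care—and which I would flag as the main obstacle—is the rescaling that transforms $B'\ttimes \uno\uno^T$ into $(NB'N)\ttimes ww^T$. The inflation product is sensitive to block normalization, and the identity $NB'N = \tilde B$ emerges precisely because the block-wise norms of $\uno_{n_i}$ match the $(i,i)$ diagonal entries of $N^2$. Once this normalization is carried out correctly, both claims of the theorem reduce to a direct application of Lemma~\ref{lem:eig_xx} and a short bookkeeping verification of the eigenvector formula.
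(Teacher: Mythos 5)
Your proposal is correct and follows essentially the same route as the paper: recognize $\bar M = (B-\delta\delta^T\!/\nu)\ttimes\uno\uno^T$ from Lemma~\ref{lem:M_SBM}, rescale the all-ones blocks to unit vectors $w_i=\uno/\sqrt{n_i}$ so that $\bar M = N(B-\delta\delta^T\!/\nu)N\ttimes ww^T$, and then apply Lemma~\ref{lem:eig_xx} together with the entrywise identification $v_i\ttimes w = N^{-1}v_i\ttimes\uno$. The only (harmless) addition is your explicit rank-$k$ remark via $\bar M=\Theta B'\Theta^T$, which the paper obtains implicitly from the fact that Lemma~\ref{lem:eig_xx} already yields a full spectral decomposition.
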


\begin{proof}
Observe that the number $\delta_i$ is the average
degree of the nodes belonging to the $i$-th block.
In fact, the average degree vector $\bar d$ introduced
in Lemma \ref{lem:M_SBM} can be written as
$\bar d = \delta \ttimes \uno$. 
By considering the explicit form of $\bar M$ which is given in 
that lemma, it is not difficult to 
recognize that $\bar M = Z \ttimes \uno\uno^T$ where
\begin{equation}   \label{eq:Z}
   Z = B - \delta\delta^T\!/\nu .
\end{equation}
Define $w = (w_1^T,\ldots,w_k^T)^T$    
where $w_i = \uno/n_i^{1/2}\in\mathbb{R}^{n_i}$ and let $Z = (z_{ij})$. 
Then, for $i,j = 1,\ldots,k$ 
the block $(i,j)$ of 
$Z\ttimes \uno\uno^T$ is the $n_i\times n_j$ matrix
$$
   z_{ij}\uno\uno^T = n_i^{1/2}n_j^{1/2} z_{ij}w_iw_j^T
   = (NZN)_{ij}w_iw_j^T = (N ZN \ttimes ww^T)_{ij} ,
$$
hence $\bar M = N ZN \ttimes ww^T$. 
The fist part of the
claim follows by a straightforward application of Lemma \ref{lem:eig_xx}. To complete the proof it is sufficient to observe that the $j$-th sub-vector of $z_i = v_i\ttimes w_i$
is 
$$
   (v_i\ttimes w_i)_j = (v_i)_j \uno /n_j^{1/2}
   = (N^{-1}v_i)_j \uno = (N^{-1}v_i \ttimes \uno)_j 
   \in\mathbb{R}^{n_j} 
$$
for $j = 1,\ldots,k$,
hence $z_i = N^{-1}v_i\ttimes\uno$ and the proof is complete.
\end{proof}

We point out
that the matrix $N(B-\delta\delta^T\!/\nu)N$
has a nontrivial kernel. Indeed, if $v = (n_1,\ldots,n_k)^T$ then
$Bv = \delta$, $\delta^Tv = \nu$ and
$$
   N(B-\delta\delta^T\!/\nu)N\uno = 
   N(B-\delta\delta^T\!/\nu)v = N(\delta - \delta(\delta^Tv/\nu))
   = 0 .
$$
Hence, the matrix $\bar M$ for a SBM with $k$ blocks
has at most $k-1$ nonzero 
eigenvalues.
This fact suggests that the presence of $k$ ``dominant''
modules in a SBM graph is indicated by $k-1$ dominant eigenvalues in the spectrum of the modularity matrix. 
The special case where all those modules have positive modularity
(i.e., they are communities) has been considered in 
Theorem 6.1 of
\cite{fasino2014algebraic}. 

\begin{remark}
Let $\Delta = \mathrm{Diag}(\delta_1,\ldots,\delta_k)$. 
Then, $\bar W = \Delta \ttimes I$
is the diagonal matrix of the average degrees in the SBM.
By means of arguments completely analogous to the preceding ones,
we can also consider the
matrix $\widetilde M = \bar W^{-1/2}\,\bar M\,\bar W^{-1/2}$ that,
within some approximation, can be considered as the
average modularity matrix associated to the second 
case of Example \ref{ex:NG}.
The results below are straightforward:
\begin{itemize}
\item
$\widetilde M = \Delta^{-1/2}Z\Delta^{-1/2}\ttimes \uno\uno^T$ 
where $Z$ is as in \eqref{eq:Z};

\item
equivalently, $\widetilde M = N\Delta^{-1/2}Z\Delta^{-1/2}N\ttimes ww^T$ 
where $w$ is as in the proof of Theorem \ref{thm:SBM}; 

\item
let $(\lambda_i,v_i)$ be an eigenpair of $N\Delta^{-1/2}Z\Delta^{-1/2}N$
and let $z_i = N^{-1}v_i\ttimes\uno$.
Then $\widetilde Mz_i = \lambda_iz_i$,
and $\widetilde M$ has no other nonzero eigenvalues.   
\end {itemize}
Note that the eigenvectors $z_1,\ldots,z_k$ found above
of both $\bar M$ and $\widetilde M$
are constant within each block of the SBM. 
Consequently, if $\mathcal{C} = \{C_1,\ldots,C_k\}$
covers $V$ and $X = [\chi_1,\ldots,\chi_k]$ then
$\cos(\bar M,\mathcal{H}(X)) = 
\cos(\widetilde M,\mathcal{H}(X)) = 1$.
\end{remark}

\section{Numerical examples}   \label{sec:numerical}

In this section we apply the spectral method discussed so far 
to the simultaneous search  
for communities and anti-communities in some 
real world
networks of moderate size. The main aim of the following experimental analysis is to show the 
effectiveness of the method in recognizing different  
components of a   
network being either well inter-connected or well intra-connected, by exploiting the invariant subspace associated to the eigenvalues with largest modulus  
of the modularity matrix. 
We do not concern ourselves with 
implementation details and performance issues.

In the forthcoming experiments 
we consider graphs with unweighted edges, $w\equiv 1$, 
and we fix the vertex weight function 
$\mu\equiv 1$. This leads to the original modularity formulation 
\eqref{eq:NG}.
We locate 
a small number $k$ of dominant eigenvalues of the modularity 
matrix that are well separated from the others
on the basis of a visual inspection of the spectrum.
More precisely, considering the eigenvalues 
numbered as in \eqref{eq:lambdaM},
we plot the ratios $|\lambda_i|/|\lambda_{i+1}|$
for $i = 1,\ldots,20$; the index $k$  
is chosen by discriminating those ratios that 
stand out from the others, see Figure \ref{fig:eigs}.
Then, we collect
the associated eigenvectors in an $n\times k$
matrix; the rows of that matrix are 
partitioned into $k+1$ groups
by means of the Matlab function {\tt kmeans},
and the resulting clusters are transferred to the original network.

\begin{figure}[ht]
\begin{center}
\includegraphics[width=0.45\linewidth]{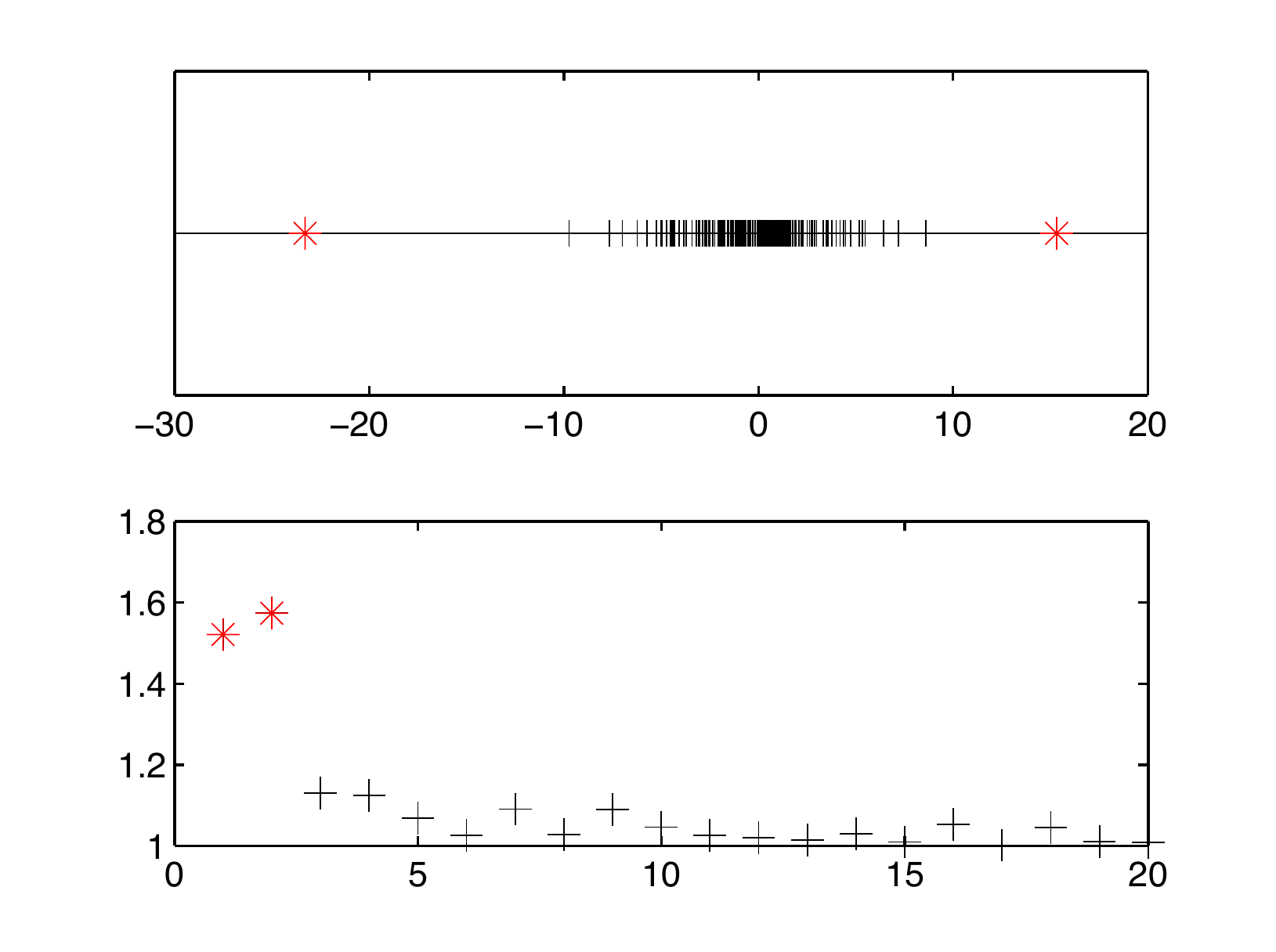}
\includegraphics[width=0.45\linewidth]{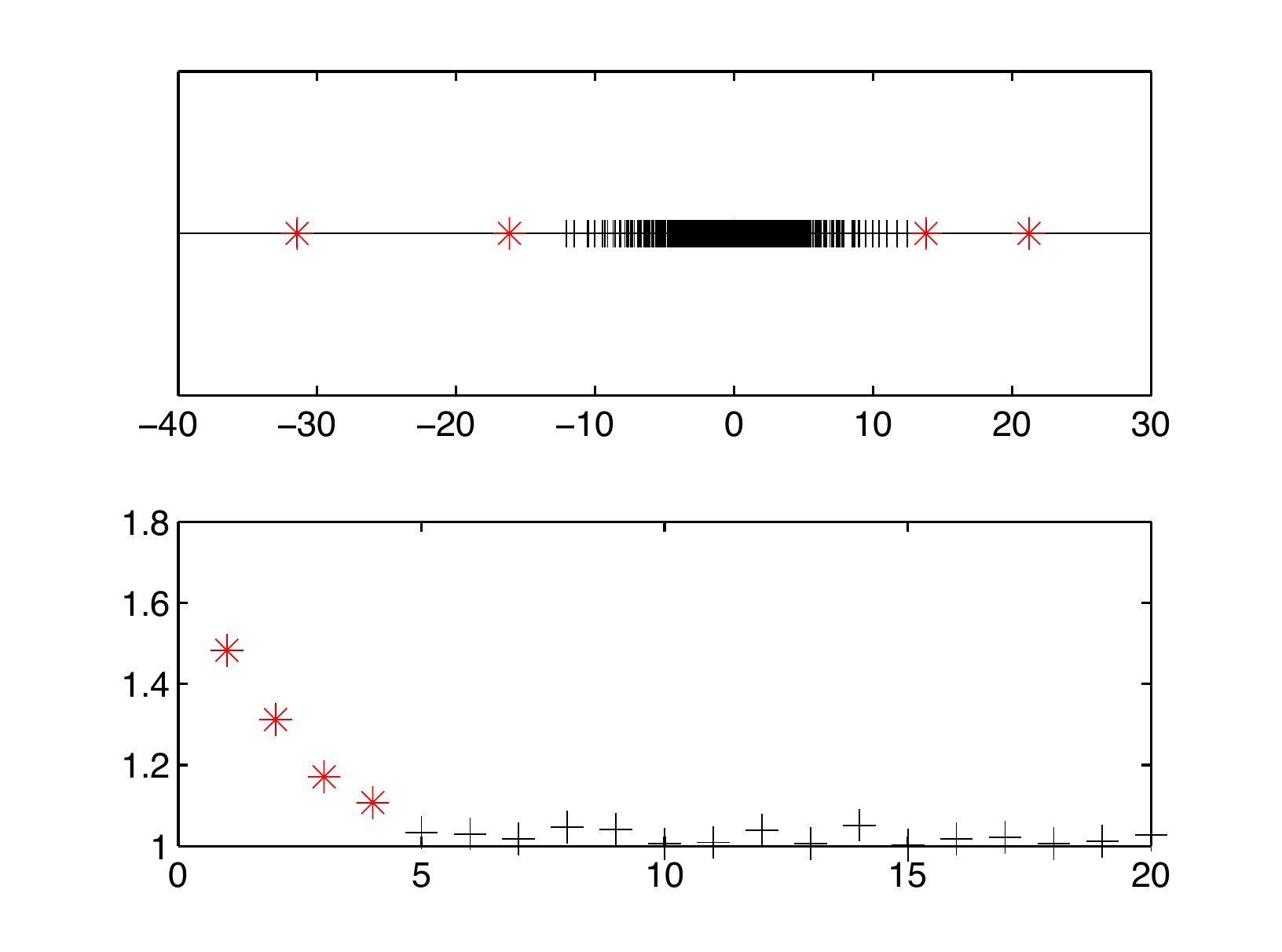}
\end{center}
\caption{
Modularity 
eigenvalue distributions (top) 
and ratios $|\lambda_i/\lambda_{i+1}|$ (bottom)
of two real world networks. 
Left: \textit{Small World} citation network. Right: {\em E.\ Coli} protein network. Eigenvalues marked with a red asterisk are considered as relevant.}
\label{fig:eigs}
\end{figure}

\subsubsection*{Small World citation network} 

The first example we discuss 
is an instance of the network known as ``Small world citation network'' where the nodes represent   
papers that cite the pioneering  work of  S.\ Milgram \cite{travers1967small} or contain ``Small World'' in the title, appeared in the period 1967--2003,
and edges represent citations. 
This network contains 233 nodes and 994
oriented edges and is drawn from the Garfield's collection of citation network datasets produced using the 
HistCite software \cite{garfield2004histcite}. The adjacency matrix, which is freely available in Matlab format from \cite{davis2011university}, has been symmetrized
by neglecting edge orientation.
Looking at the two leftmost
plots in Figure \ref{fig:eigs} one clearly recognizes two outliers in the spectrum of the modularity matrix. 
Thus we embed the nodes into the plane 
spanned by the eigenvectors  
associated with the two eigenvalues with largest magnitude and apply $k$-means to partition the node set into three groups, $\mathcal C = \{C_1,C_2, C_3\}$. The method clearly identifies two communities and one anti-community, as shown in Table \ref{tab:smallw} and Figure \ref{fig:smallw}. 
\begin{table}[ht]
\centering
 \begin{tabular}{c|ccc}
  & $C_1$ & $C_2$ & $C_3$\\
  \hline
$|C|$ & 104 & 2 & 127 \\    
$Q(C)$ & 336 & $-$30 & 339 \\
$q(C)$ & 3.23 & $-$15 & 2.67\\
\end{tabular} 
\caption{Size, modularity and normalized modularity values for the sets $C_1$, $C_2$ and $C_3$, located by the spectral method on the  
\textit{Small World} citation network.}\label{tab:smallw}
\end{table}

\begin{figure}[ht]
\begin{center}
\includegraphics[width=0.3\linewidth, trim =  31mm 9mm 25mm 7mm, clip]{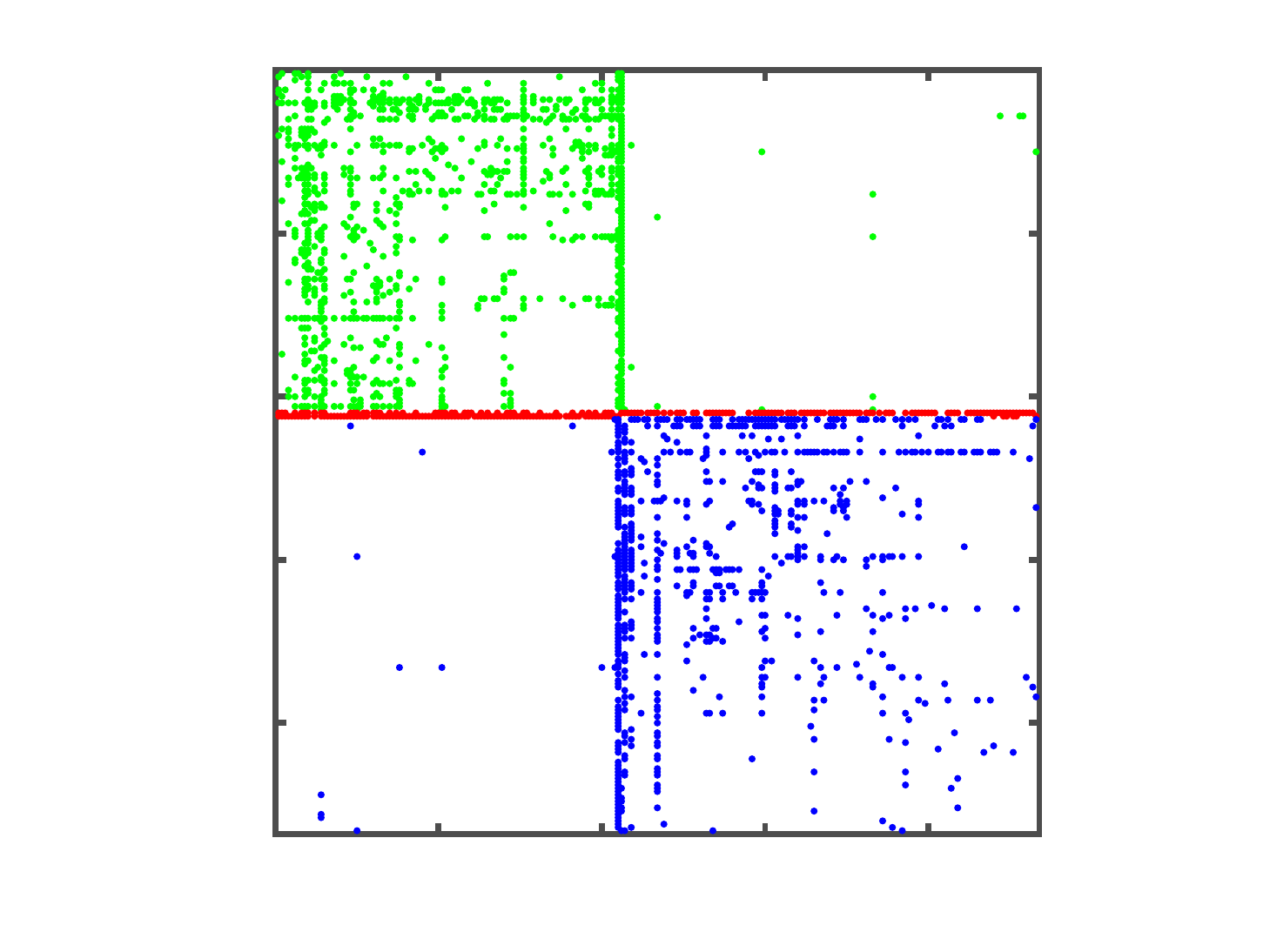}
\qquad
\includegraphics[width=0.3\linewidth, trim =  12mm 10mm 18mm 8mm, clip, angle=0]{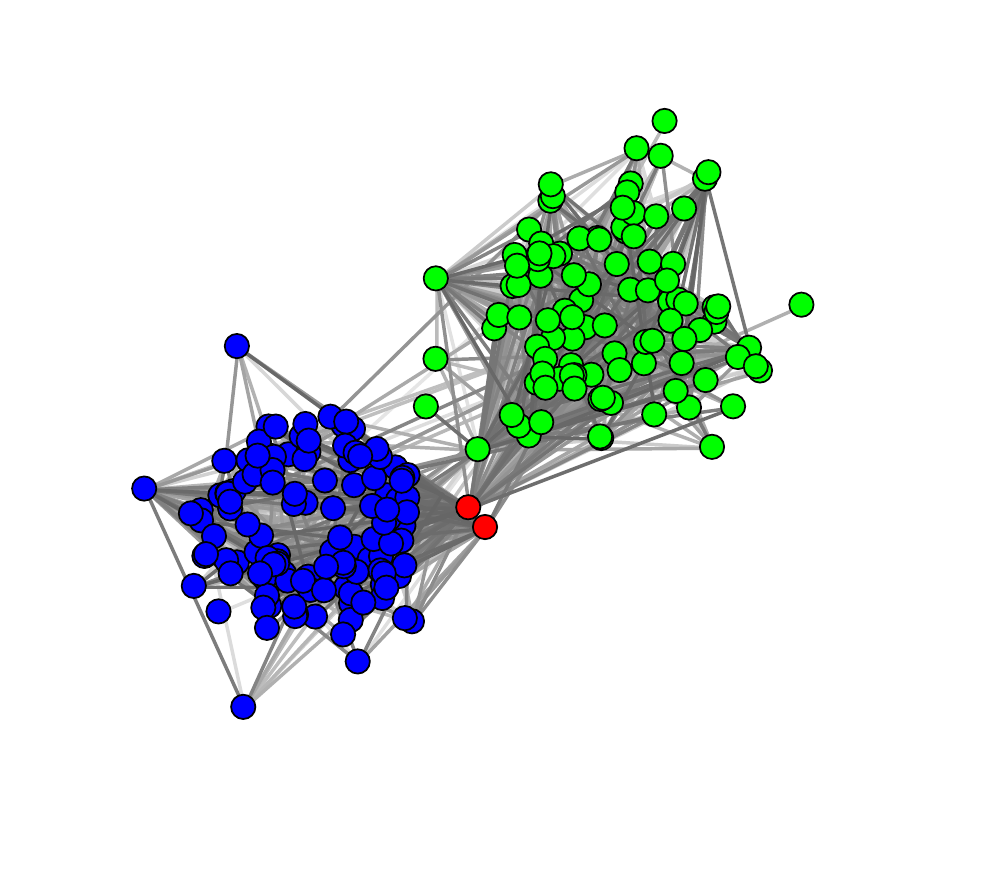}
\end{center}
\caption{(Best seen in colors) Simultaneous community/anti-community 
detection on the  
\textit{Small World} citation network. 
Left: colored sparsity pattern \textit{(spy)} of the adjacency matrix, when the  
nodes of the network are clustered using the first two eigenvectors of $M$. 
Right: graph drawing of the communities $C_1$ and $C_3$ (blue and green) and the anti-community $C_2$ (red) located by the method.}\label{fig:smallw}
\end{figure}

\subsubsection*{E.\ Coli protein-protein interaction network} 

The second 
network shows the interaction between  proteins  within the  \textit{Escherichia Coli} bacterium. Several instances of this network have been gathered in recent years. Here we consider the dataset from \cite{wuchty2014protein} formed by  10788 edges (interactions) between 1251 nodes (proteins).
The relative separation of the first $4$ 
modularity eigenvalues 
is rather different from that of the bulk of the spectrum,
as shown in the rightmost plots in Figure \ref{fig:eigs},
so we set $k = 4$ in this example.
Using $k$-means to locate a partition into 5 groups,
we are able to capture a clear community and anti-community structure in this network. The method locates the partition $\mathcal C=\{C_1,\dots,C_5\}$ shown by the colored sparsity patterns in Figure \ref{fig:pinecoli}. A large portion of the nodes are assigned to a single large community $C_1$ (colored  in blue) which however has a moderate modularity value $q(C_1)$ and can be regarded as ``background noise'',  
whereas the remaining nodes are well-clustered into two  communities $C_2$ and $C_3$ (colored in red and green, respectively) and two anti-communities $C_4$ and $C_5$ (colored in magenta and cyan, respectively).  
Table \ref{tab:pinecoli} shows sizes and modularity scores for the node sets in $\mathcal C$, whereas the 
rightmost plot in Figure \ref{fig:pinecoli} 
provides a qualitative overview of the community/anti-community structure of the 
most relevant part of the graph.

\begin{table}[ht]
\centering
\begin{tabular}{c|ccccc}
  & $C_1$ & $C_2$ & $C_3$ & $C_4$ & $C_5$\\
  \hline
$|C|$ & 1021 & 86 & 60 & 50 & 34 \\    
$Q(C)$ & 377 & 396 & 289 & $-$75 & $-$335 \\
$q(C)$ & 0.36 & 4.61 & 4.82 & $-$1.51 & $-$9.87\\
\end{tabular} 
\caption{Size, modularity and normalized modularity values for the sets $C_1, \dots, C_5$, located by the spectral method on the  
\textit{E. Coli} protein-protein interaction network.}\label{tab:pinecoli}
\end{table}

\begin{figure}[ht]
\begin{center}
\includegraphics[width=0.3\linewidth, trim =  34mm 9mm 25mm 12mm, clip]{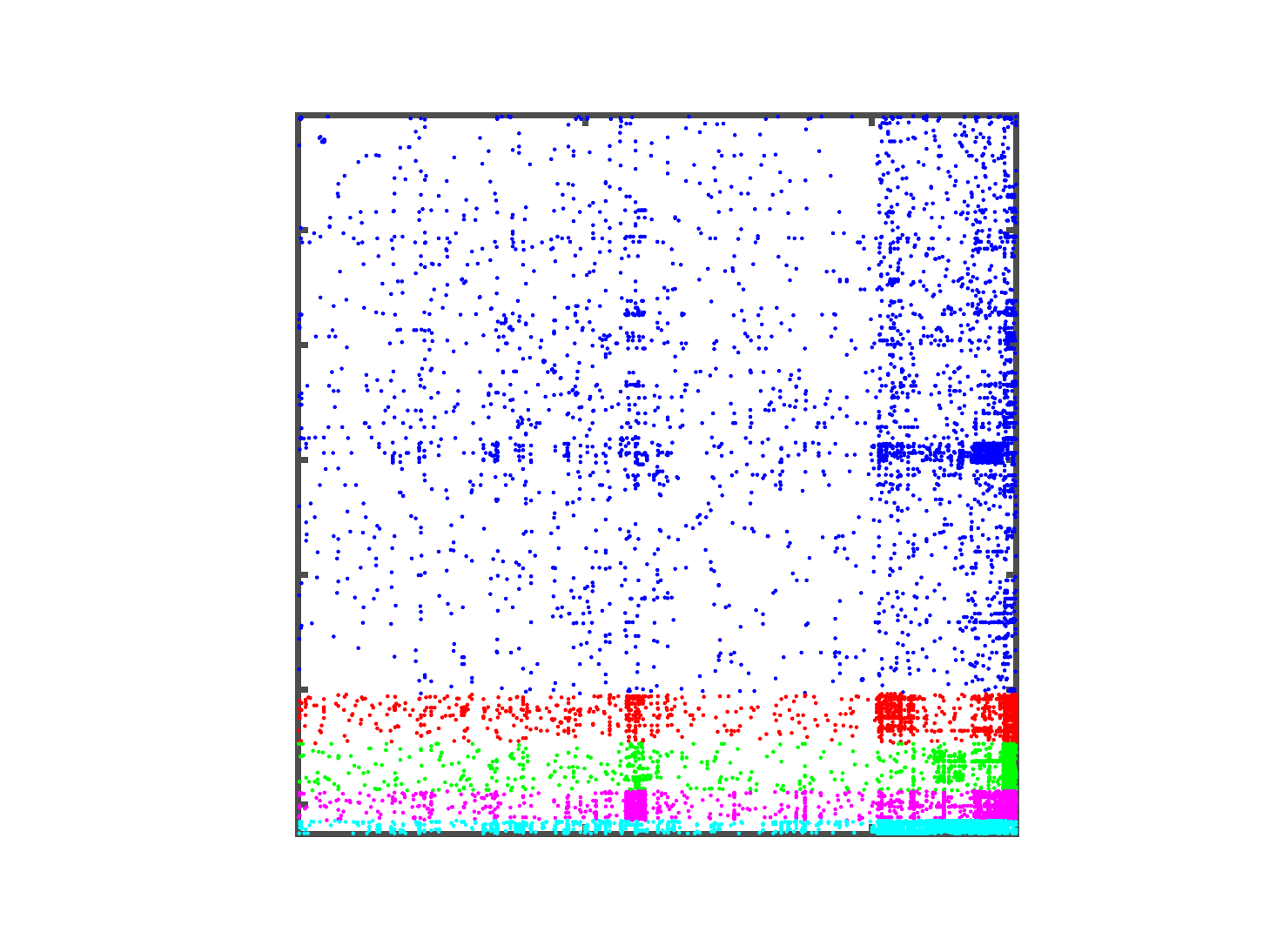}
\quad
\includegraphics[width=0.3\linewidth, trim =  34mm 9mm 25mm 12mm, clip]{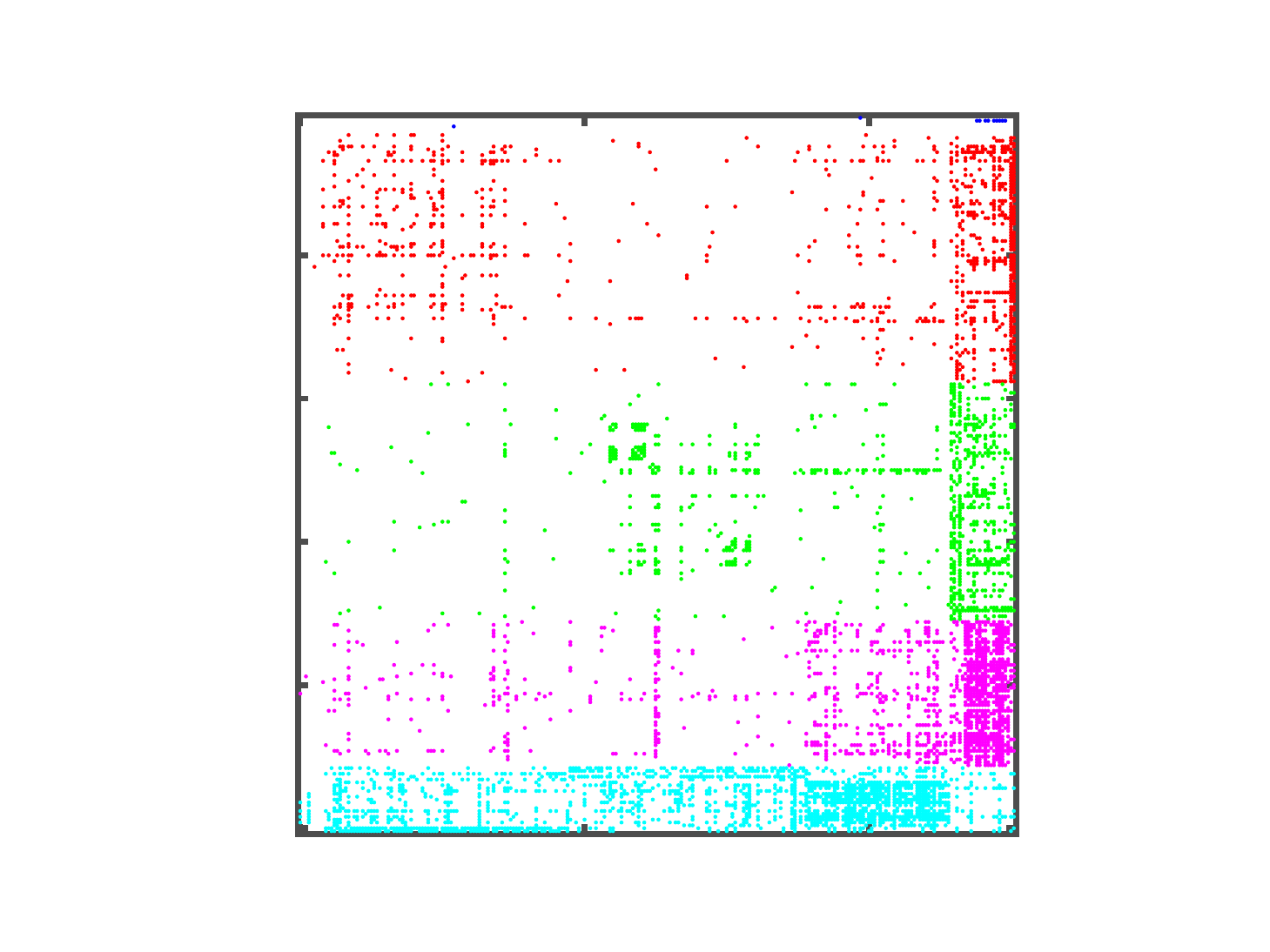}
\quad
\includegraphics[width=0.3\linewidth, trim =  54mm 33mm 41mm 29mm, clip]{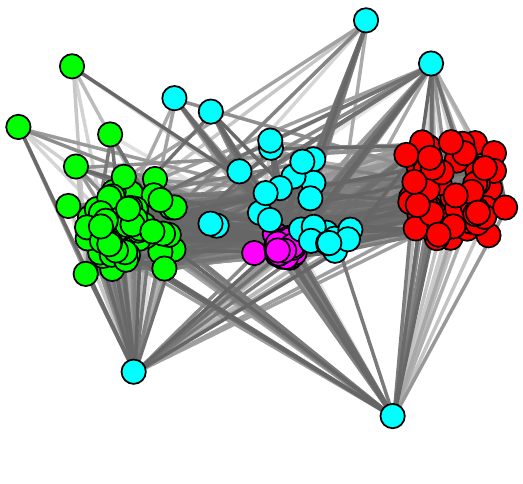}
\end{center}
\caption{(Best seen in colors) Simultaneous community/anti-community detection on 
the  
\textit{E.\ Coli} protein-protein interaction network. 
Left: colored sparsity pattern \textit{(spy)} of the adjacency matrix, when the  
nodes of the network are clustered using four dominant eigenvectors of $M$. Center: close-up of the  
entries of the adjacency matrix corresponding to the four smaller clusters $C_2,\dots,C_5$.
Right: graph drawing of the communities $C_2$ and $C_3$ (red and green) and the anti-communities $C_4$ and $C_5$ (magenta and cyan) located by the method.} \label{fig:pinecoli}
\end{figure}

\section{Conclusions}

Anti-communities are group of nodes
showing few internal connections but being highly connected with the rest of the graph. Motivated by a number of data science applications, the interest towards  this kind of 
groups is growing alongside the one for more classical communities. 

In this work we propose the use of extremal eigenvalues and eigenvectors of generalized modularity matrices to simultaneously look for non-overlapping group of nodes that are likely to be recognizable as communities or anti-communities in a network. 
That technique arises as a  
continuous relaxation of 
the combinatorial optimization problem of maximizing 
the sum of squared normalized modularities.

Our approach is not bound to one specific definition of modularity,
and allows for different normalization terms.
We provide  
other matrix theoretical evidences of the soundness of the spectral method proposed, together with a detailed analysis of the stochastic block model (SBM). Even though spectral approaches based on Laplacian or adjacency matrices  
have been proposed in the past
for this generative model, 
the use of generalized modularity matrices in this context has been mostly overlooked. Our analysis together with our final numerical examples, instead, show the  
effectiveness of this technique
on real world data. 
Furthermore, the machinery we developed for our results can be employed to extend our analysis to the more  
flexible  
{\em degree corrected SBM}
\cite{KarrerNewman2011,QinNIPS2013}
which allows the generation of
random graphs having greater degree heterogeneity
and may yield a better understanding of the behavior of 
spectral methods on real world data.

\end{document}